\documentclass[onecolumn,a4paper,accepted=2019-08-18]{quantumarticle}
\pdfoutput=1
\usepackage{amssymb,amsmath,amsthm,tikz,mathtools,syntax,stmaryrd}
\usepackage[numbers]{natbib}
\bibliographystyle{plainnat}

\theoremstyle{plain}
\newtheorem{theorem}{Theorem}
\newtheorem{lemma}[theorem]{Lemma}
\newtheorem{corollary}[theorem]{Corollary}
\theoremstyle{definition}

\usetikzlibrary{shapes,arrows}

\newcommand{\h}{\mathcal{H}}

\newcommand{\xor}{\leavevmode\hbox{\footnotesize{XOR }}}
\newcommand{\nott}{\leavevmode\hbox{\footnotesize{NOT }}}
\newcommand{\cnot}{\leavevmode\hbox{\footnotesize{CNOT }}}
\newcommand{\ket}[1]{\ensuremath{\left|#1\right\rangle}}
\newcommand{\bra}[1]{\ensuremath{\left\langle #1\right|}}

\newcommand{\ssur}[1]{^{\smash{(#1)}}}




\title{The ZX calculus is a language for surface code lattice surgery}
\date{2 September, 2019}
\author{Niel de Beaudrap}
\affiliation{Department of Computer Science, University of Oxford, Parks Road, Oxford, OX1 3QD}
\email{niel.debeaudrap@cs.ox.ac.uk}
\author{Dominic Horsman}
\affiliation{Department of Physics, Durham University, South Road, Durham, DH1 1LE\\
Department of Computer Science, University of Oxford, Parks Road, Oxford, OX1 3QD}
\email{dom.horsman@gmail.com}

\begin{document}
\maketitle

\begin{abstract}
A leading choice of error correction for scalable quantum computing is the surface code with lattice surgery.
The basic lattice surgery operations, the merging and splitting of logical qubits, act non-unitarily on the logical states and are not easily captured by standard circuit notation.
 This raises the question of how best to design, verify, and optimise protocols that use lattice surgery,
in particular in
architectures with complex
resource management issues. 
In this paper we demonstrate that the
operations of the ZX calculus --- a form of quantum diagrammatic
reasoning based on bialgebras --- match exactly the operations
of lattice surgery. Red and green ``spider'' nodes match rough and
smooth merges and splits, and follow the axioms of a dagger special
associative Frobenius algebra. Some lattice surgery operations
require non-trivial correction operations, which are captured
natively in the use of the ZX calculus in the form of ensembles of
diagrams. We give a first taste of the power of the calculus as a
language for lattice surgery by considering two operations (T gates
and producing a \cnot) and show how ZX diagram re-write rules give
lattice surgery procedures for these operations that are novel,
efficient, and highly configurable.

\end{abstract}

\section{Introduction}

With the development of (small-scale, noisy) devices, quantum computing is in the midst of moving from concept to mature technology \cite{IBM,google}.
As we try to determine how best to make quantum technology scalable, it becomes apparent that new theoretical tools would be useful, both to organise the components of quantum computers, and to reason effectively about the way that these systems are constructed.

A key element of any such large-scale fault-tolerant architecture is quantum error correction \cite{van2013blueprint,terhal2015}.
Currently the technique of choice for flexibility and efficient use of resources is the surface code with lattice surgery \cite{bravyikitaev,melattice,fowler2018low}.
The surface code, in this case, encodes a single logical qubit in the state of many entangled qubits in a single planar surface (a quantum memory). Error detection and correction are performed by the repeated measurement of stabilizers across the surface, to track how noise affects the encoded data.
Lattice surgery operations are performed by splitting surfaces into two or more (using measurements), and by merging two or more surfaces together (by further stabilizer measurements).

An intriguing feature of lattice surgery operations is that they necessarily involve quantum transformations of memories by CPTP maps which are not unitary transformations.
Two-qubit operations in the previously-standard circuit model, such as a \cnot, may be realised deterministically in lattice surgery, using classical processing similar to that involved in standard teleportation.
However, the basic split and merge operations themselves do not rest easily with the standard unitary circuit model. 
The absence of a notation or language specifically for lattice surgery has made it hard efficiently to design, verify, and compile {novel} surgery patterns.
It has also lead to the occasional misrepresentation of lattice surgery as a method for merely simulating unitary gates, rather than a set of computational primitives in its own right. 
Previous methods for representing lattice surgery include as square patches \cite{herrnoridevitt}, moving squares around in a `game' style \cite{litinski2019game}, and as 3D space-time figures in CAD software \cite{gidney2018efficient}. 
In all cases, these representations get visually and technically unwieldy very quickly, making design, optimisation, and particularly verification (that a lattice surgery procedure is doing what we think) challenging.

In this paper we link the operations of lattice surgery directly to a pre-existing diagrammatic language for quantum computing: the ZX calculus. 
Developed over the last ten years, the ZX calculus (sometimes referred to simply as ``ZX'') is an abstract graphical language for tensor networks that is complete for quantum mechanics \cite{zxbook,HNW}.
We show in this paper how the lattice surgery primitives of split and merge --- including their non-deterministic byproduct operations --- are precisely captured by some of the simplest diagrams of this calculus. 
In order to do this, we also give a more fine-grained analysis of lattice surgery operations and corrections than {appears in the pre-existing literature}.
The ZX calculus is therefore a language for lattice surgery, meaning that results of work on the calculus can be imported directly for use with lattice surgery.

Unlike circuit notation, the ZX calculus is indeed a \emph{calculus}: a formal language, with meaning-preserving rules for how those diagrams may be transformed, without the need to transcribe those operations as exponentially large matrices (even for full pure-state QM, that cannot be efficiently simulated using e.g. stabilizer notation \cite{gottesman1996class}).
By such transformations of these diagrams --- corresponding to different sets of lattice surgery procedures that implement the same operation --- new protocols may thus be discovered.
We give examples of re-writing diagrams for \cnot and T-gate operations to demonstrate this technique.
In the process, we discover six novel \cnot implementations in lattice surgery, and significantly reduce the operational overhead of the T-gate implementation.

The use of the ZX calculus for surface codes with lattice surgery makes the manipulation of error corrected operations visually intuitive, and capable of verification at large scales. As these operations will likely form the basic operations that a fault-tolerant device will use at the logical level, ZX therefore becomes the natural language and logic for programming large-scale quantum computing technologies.

\section{The planar surface code}

The planar surface code uses a 2D lattice segment, in which every edge of the lattice (between neighbouring vertices) is associated with a qubit~\cite{bravyikitaev,freedman2001projective,dennis2002topological,terhal2015}.
These physical qubits encode a single logical qubit's worth of information in their {(generally highly-entangled)} joint state. For a lattice of $n$ qubits there are $n-1$ 
\emph{stabilizers} (operators for which the state of the lattice is a {positive {$+1$}} eigenstate), fixing a subspace which encodes a single degree of freedom (the logical qubit) across the physical qubits. 

Figure \ref{lattice}(a) shows a distance-3 planar surface code lattice (the distance of a code is the weight of the smallest non-trivial error which cannot be detected by the code). The black circles correspond to physical qubits; everything else in the diagram is an aid to the eye. A planar code lattice has two types of boundary: rough (here, left and right) and smooth (here, top and bottom). 
The stabilizers are defined in terms of {local relationships forming} ``plaquettes''. In this paper we use the convention that the term ``plaquette'' refers to any set of qubits, {either} on the boundary of a face or surrounding a vertex of the lattice.
Specifically, the operators ${Z\otimes Z\otimes Z \otimes Z}$ around each face (such as the blue-shaded set of four qubits), and ${X\otimes X\otimes X \otimes X}$ around each vertex (such as the brown shaded region).
For plaquettes on the surface boundary, the corresponding {stabilisers are} $Z \otimes Z \otimes Z$ or $X \otimes X \otimes X$.
Figure \ref{lattice}(a) has 13 physical qubits, and 12 plaquettes/stabilizers. 
The {codespace is the} simultaneous {$+1$} eigenstate of all of the stabilisers: taken as multi-qubit observables, these operators can be measured without changing the {encoded ``lattice state''}. 

The information of the logical qubit is accessed through \emph{logical operators}, {as shown in {Figure~\ref{lattice}(b)}. 
Any {horizontal} chain of $Z$ operators starting and finishing on (separate) rough boundaries is a logical $Z_L$ operator, and any {vertical} chain of $X$ operators between smooth boundaries is a logical $X_L$ operator.}
These operations commute with all of the stabilizers, {and} so preserve the code.
{The significance of the $Z_L$ and $X_L$ operators} is two-fold. Firstly, they {realise} operations on the encoded logical qubit. {For instance, a bit-flip of the logical qubit is accomplished via} a sequence of physical $X$ operations on qubits forming a chain between smooth boundaries {(the $X_L$ operator)}. The logical operators act secondly {as measurable observables of the encoded data itself, encoding specific logical states}. For instance, if a chain of qubits between rough boundaries are measured out in the $Z$ basis, then the {product of the measurement results $\pm 1$} {is the eigenvalue} of the logical $Z_L$. If the result is {$+1$} then the logical state was {$\ket{0}_L$}, and if {$-1$} then it was {$\ket{1}_L$}.

 {Regular measurements of the stabilizers allow for error detection and correction by detecting changes to their eigenvalues.
A Pauli {$X$ or $Z$} error on a physical qubit flips {the eigenvalue of the state to $-1$, for the stabilisers of the opposite type which act on that qubit ($Y$ errors affect both types of stabilisers)}.
A ``string'' of identical Pauli errors {connecting qubits by horizontal and vertical steps} will only flip stabilizers at the endpoints of the string. 
In practise, we avoid explicitly correcting errors (which adds noise owing} to imperfections of the correcting gates), and instead keep track of the accumulated errors classically.
This information is referred to as a ``Pauli frame''~\cite{knill2005pauli}, and plays the role of a reference frame which indicates how information is stored in the error-affected memory.
We refer to the usual reference frame of the surface code, in which the encoded state is a $+1$ eigenstate of every ``plaquette operator'' (i.e.~the state is literally stabilised by those operators) as the \emph{positive (Pauli) frame}.
Pauli frames  require updating not only owing to error, but also after some lattice surgery operations, {described below}.

\begin{figure}
	\centering
	\begin{align*}
      \begin{gathered}%
      \includegraphics[height=5cm]{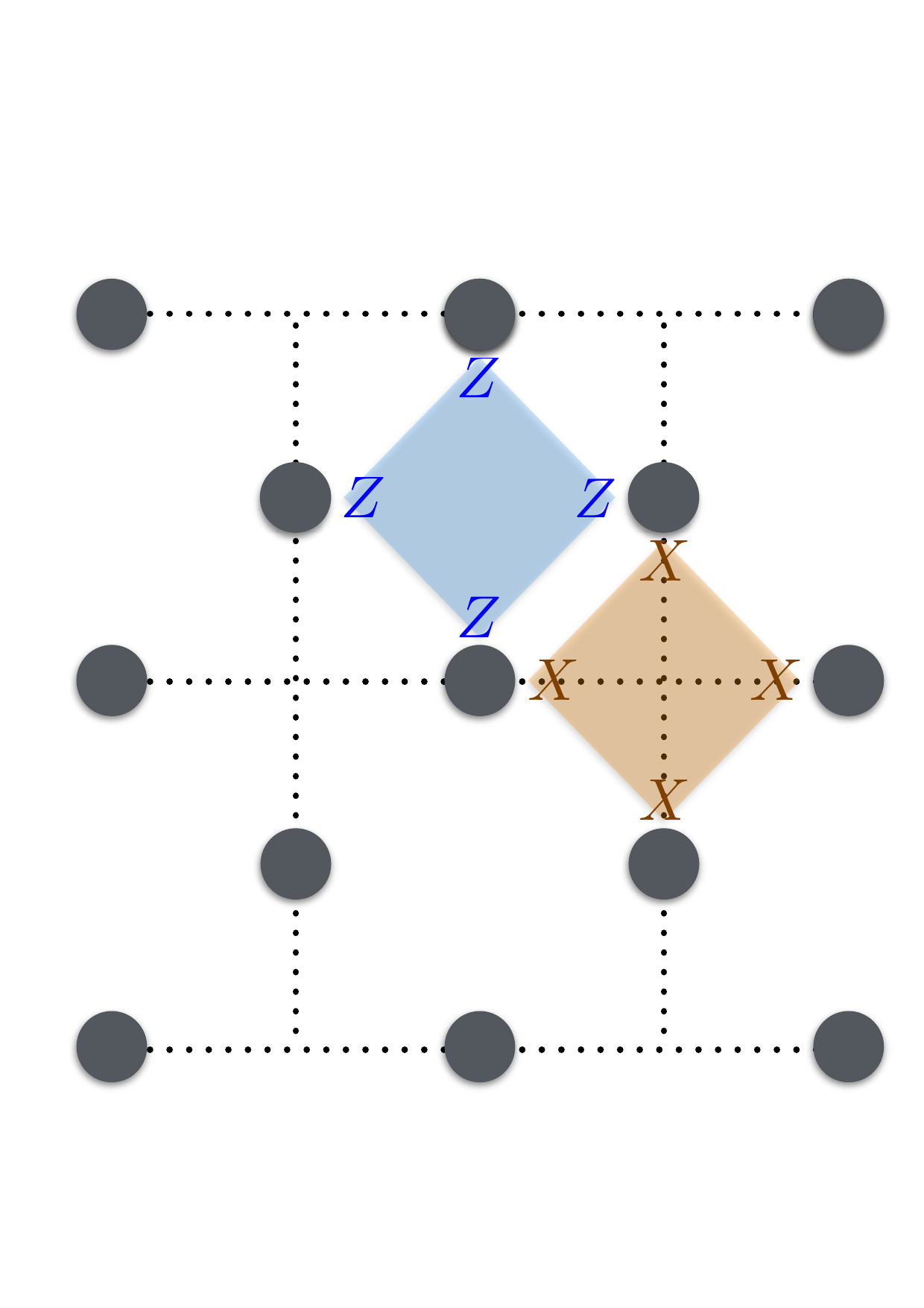}
      	 \\[-3ex]
      	\textup{(a)}
      \end{gathered}
      &&
      \begin{gathered}%
      \includegraphics[height=5cm]{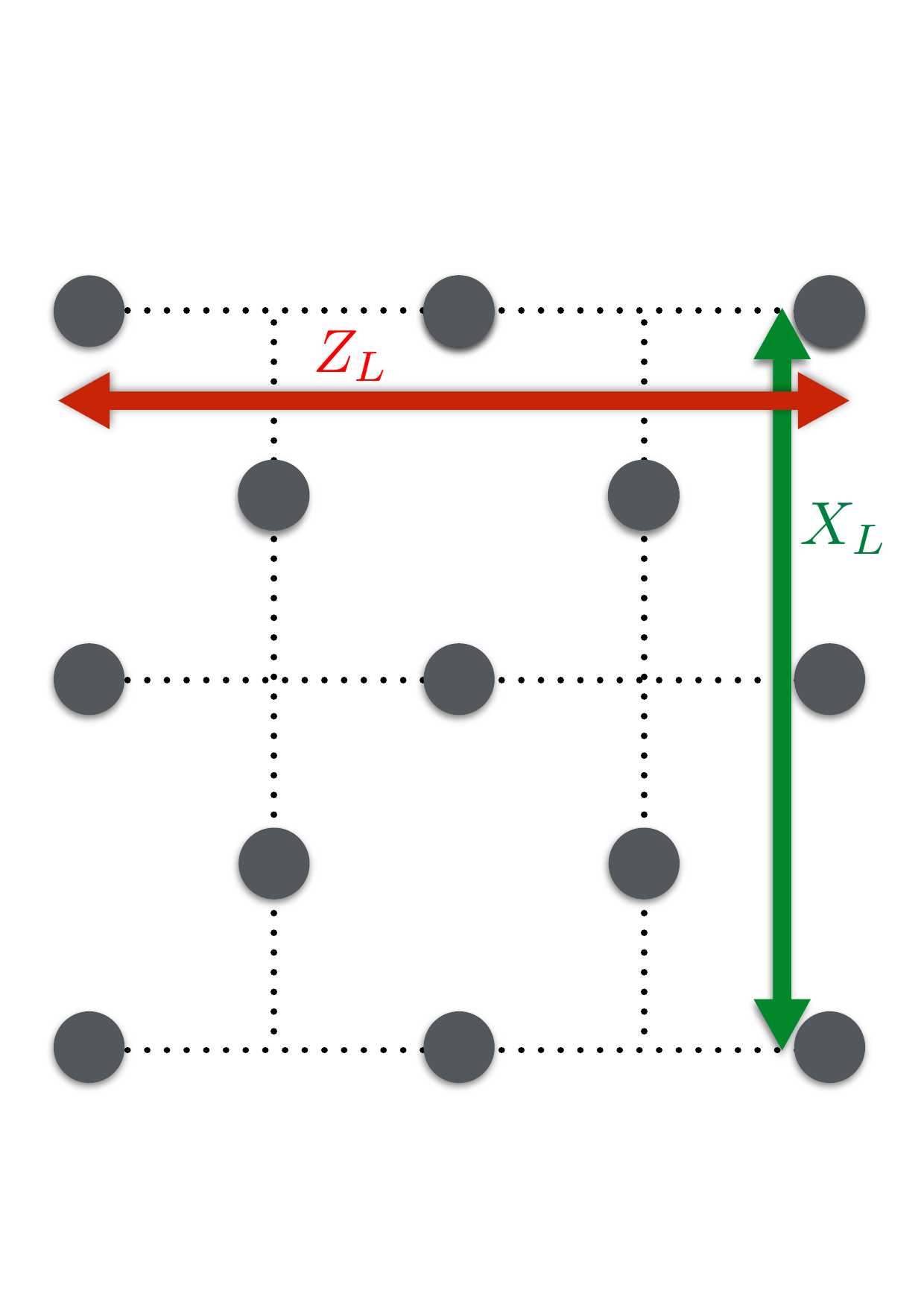}
      	\\[-3ex]
      	\textup{(b)}  
      \end{gathered}
	\end{align*}~\\[-4ex]
	\caption{A distance 3 planar surface code. (a)~Stabilizers on face and vertex plaquettes. (b)~Support of the logical operators: {a $Z_L$ operator  may be realised by a sequence of $Z$ operators on any horizontal row of qubits connecting the two ``rough'' boundaries, and similarly an $X_L$ operator my be realised by a sequence of $X$ operators on any vertical column of qubits connecting the ``smooth'' boundaries.~
    }}
	\label{lattice}
\end{figure}

\section{{Lattice surgery}}

Lattice surgery is a method of combining or decomposing planar {surface-coded memories,} to perform multi-qubit operations on the encoded information\footnote{The other method for planar codes is transversal \cite{dennis2002topological}, requiring more operations and/or connectivity. Other procedures for surface codes involve introducing defects into the lattice, and braiding \cite{raussendorf2007,fowler2009high} or twisting them \cite{bombin2010topological}, or by code deformation \cite{bombin2009}.} \cite{melattice}. There are two types of surgery operation, \textbf{split} and \textbf{merge}, which can be either ``rough'' or ``smooth''.
These operations all change the number of memories present, necessarily introducing discontinuities in the {Pauli} frames of these memories.
By accounting appropriately for these changes in the reference frames, we may regard the effect of ``split'' operations on the encoded state as a single CPTP map from density operators on $\h \cong \mathbb{C}^2$ to operators on $\h \otimes \h$; and ``merge'' operations as taking a density operator on $\h \otimes \h$ and producing one on $\h$.

\subsection{Splitting}
\label{sec:split}

Figure~\ref{split}(a) {illustrates} a \textbf{rough split}. For {greater clarity}, the memory is shown with a greater width than height, {though this is not required}. {We perform $Z$ measurements on} the intermediate purple qubits {(shown crossed out)}.
The result is {two new planar surface memories}, each encoding {a qubit, with} a boundary where the column of measured qubits used to be.
Most of the stabilizers in the new ``daughter'' memories will be the same operators (and have the same eigenvalues) as in the original ``mother'' memory.
However, along the boundary the daughter memories will have modified stabilisers of weight $3$ rather than $4$,
and some of these may end up in the $-1$ eigenstate, requiring correction to complete the logical operation.
\begin{figure}
	\centering
	\begin{align*}
	  \begin{gathered}%
        \includegraphics[height=3cm]{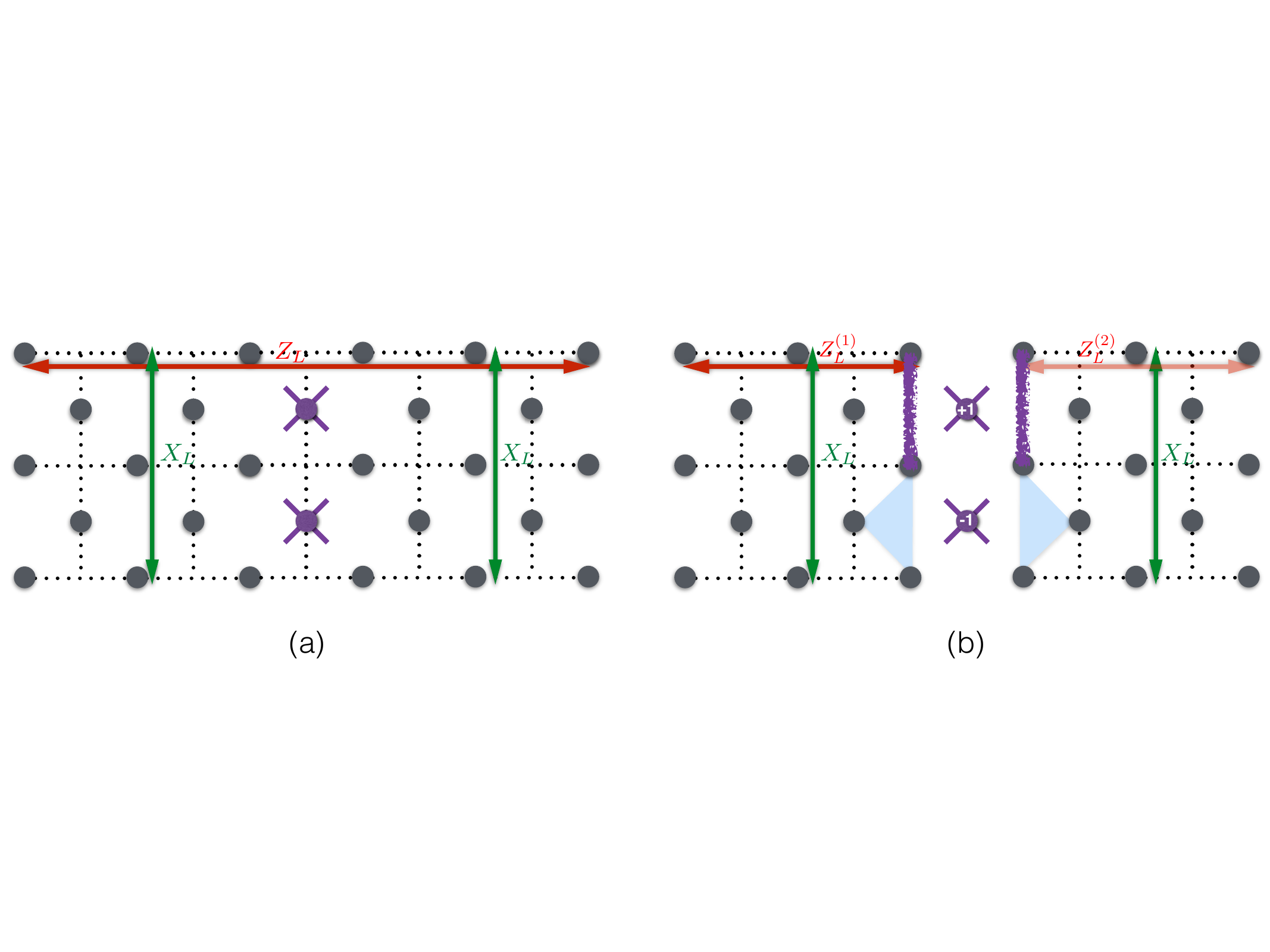}
      \\
        \textup{(a)}
	  \end{gathered}
	  &&\qquad
	  \begin{gathered}%
        \includegraphics[height=3cm]{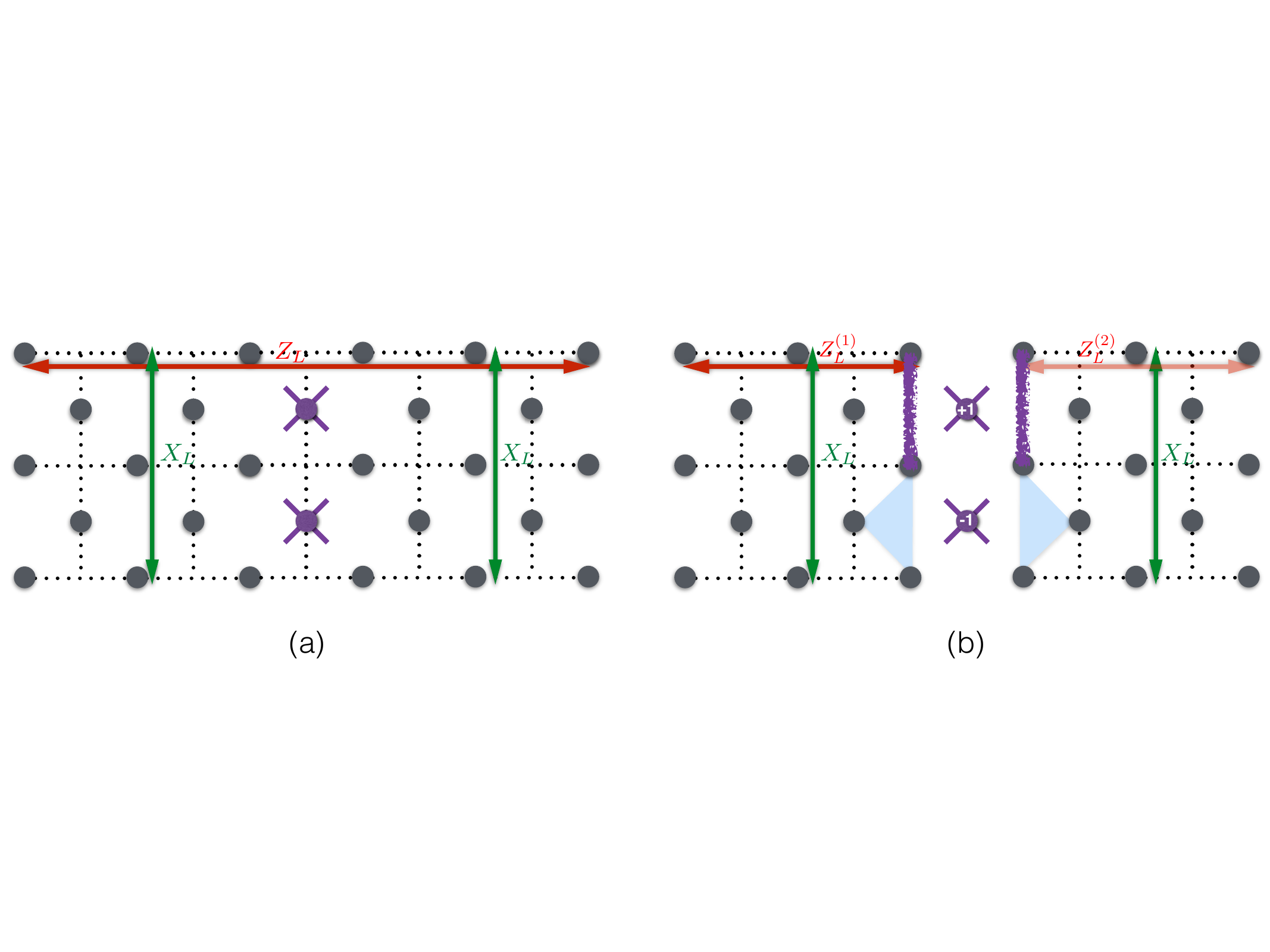}
	  \\
        \textup{(b)}
	  \end{gathered}
	\end{align*}~\\[-4ex]
	\caption{A rough split. (a) Purple qubits are measured out in the $Z$ basis and determine the Pauli frame of the two new surfaces. The split copies $X_L$, and distributes $Z_L$ across the two resulting surfaces. (b)~Correction of a $-1$ measurement result on the lower qubit: a chain of X operations (purple line) joining the errored syndromes (blue shading) to a boundary. }
	\label{split}
\end{figure}

Let $X_L\ssur{0}$ represent a logical $X$ operator on the mother memory, and $X_L\ssur{1}$ and $X_L\ssur{2}$ represent logical $X$ operators on the left and right daughters (similarly for $Z_L\ssur{0}$, $Z_L\ssur{1}$, and $Z_L\ssur{2}$).
Note that $X_L\ssur{1}$ and $X_L\ssur{2}$ both commute with the measurements of the split operation, so 
{the expected value of either $X_L\ssur{1}$ or $X_L\ssur{2}$ is equal to $X_L\ssur{0}$ before the split:}
$X_L\ssur{0} \equiv X_L\ssur{1} \equiv X_L\ssur{2}$.
The {$Z_L\ssur{0}$} operator, however, is distributed across the two new surfaces. As $Z_L\ssur{0}$ commutes with the split procedure and yields $Z$ operators across both memories, it decomposes as $Z_L\ssur{0} \equiv Z_L\ssur{1} Z_L\ssur{2}$.
{In particular, if} subsequently measured, the outcome of measuring $Z_L\ssur{1} Z_L\ssur{2}$ is the same as measuring $Z_L\ssur{0}$ before the split: $\langle Z_L\ssur{1} Z_L\ssur{2} \rangle = \langle Z_L\ssur{0} \rangle$.

{If all measurement outcomes of the split are $+1$, then no corrections are applied. However, if one or more measurements give $-1$, as in {Figure~\ref{split}(b)}, this produces matching pairs of $Z$-plaquette stabilizers on the boundaries of each daughter memory that are in the $-1$ eigenstate (shaded blue). {We adapt the Pauli frames (simulating a correction) by} a chain of physical $X$ operations linking to the top boundary (purple chain). The lower boundary could also be used; the difference would be equivalent to a logical {$X_L\ssur{i}$} operation. However, the split has copied $X_L\ssur{0}$, so the daughters are in an eigenstate of $X_L\ssur{1} \otimes X_L\ssur{2}$. As long as the same boundary is chosen for both, the correction strategies differ only by the stabilizer operation $X_L\ssur{1} \otimes X_L\ssur{2}$ and {therefore are} equivalent.}
In this way, we may accommodate the changes in Pauli frame introduced by the measurements, and represent the effect of the split operation on the logical state as a unitary embedding on density operators 
$  \mathsf S_R(\rho) \,=\, U_{R}^{\phantom\dagger}\,\rho\,U_{R}^\dagger$\,,
where $
U_{R} \,=\, \ket{\texttt{++}}\!\!\bra{\texttt{+}} \,+\, \ket{\texttt{--}}\!\!\bra{\texttt{-}}$.

A \textbf{smooth split} performs the corresponding operation {while interchanging $Z$ and $X$, {and }the horizontal and vertical axes.
Performing $X$ measurements  along a row which does not quite reach the two rough boundaries, we obtain two memories
where {(by adapting the Pauli frames appropriately)} the observables on the original and new memories satisfy $Z_L\ssur{0} \equiv Z_L\ssur{1} \equiv Z_L\ssur{2}$ and $X_L\ssur{0} \equiv X_L\ssur{1} X_L\ssur{2}$.}
The effect on the logical state is then another embedding on density operators 
$  \mathsf S_S(\rho) \,=\, U_{S}^{\phantom\dagger}\,\rho\,U_{S}^\dagger$\,,
where $U_{S} \,=\, \ket{00}\!\!\bra{0} \,+\, \ket{11}\!\!\bra{1}$.

\subsection{Merging}

A \textbf{rough merge} ({illustrated} in Figure \ref{merge}) joins two ``parent'' memories along their rough edges. An intermediate column of qubits (initially here in the $\ket{\texttt{+}}$ state, although the $\ket{\texttt{0}}$ state can also be used) is added, and {$X$-plaquette operators are} measured across this join.
The result is a single ``child'' memory, whose Pauli frame is the union of the frames of the parent surfaces, corrected for the outcomes of the plaquette measurements across the join.

Let us denote the logical $X$ operators on either parent by $X_L\ssur{1}$ and $X_L\ssur{2}$, and the logical $X$ on the child memory as $X_L\ssur{3}$ (similarly for $Z_L\ssur{i}$). Measuring the $X$-plaquette operators across the join and taking the product of the outcomes is equivalent to measuring the two columns of $X$ operators on either side of the divide. These are the {$X_L\ssur{i}$} logical operators. The action of the rough merge {therefore realises a ${X_L\ssur{1} \otimes X_L\ssur{2}}$ measurement.
This removes} a degree of freedom, destroying any information carried by expectation values {of} the observables $Z_L\ssur{1}$ or $Z_L\ssur{2}$ alone. 
However, the new {$Z_L\ssur{3}$} operator has support on a chain of $Z$ operators the entire width of the surface: $Z_L\ssur{3} = {Z_L\ssur{1}\otimes Z_L\ssur{2}}$. This commutes with ${X_L\ssur{1} \otimes X_L\ssur{2}}$, and so is unchanged by the merge. The new basis states (the positive and negative eigenstates of {$Z_L\ssur{3}$}) arise from the previous ones, $\ket{00}_{12}, \ket{11}_{12} \mapsto \ket{0}_3$ and $\ket{01}_{12}, \ket{10}_{12} \mapsto \ket{1}_3$: this effectively computes the \xor of the labels of the standard basis states (and similarly for superpositions of these basis states).
\begin{figure}
	\centering
	\includegraphics[width=8cm]{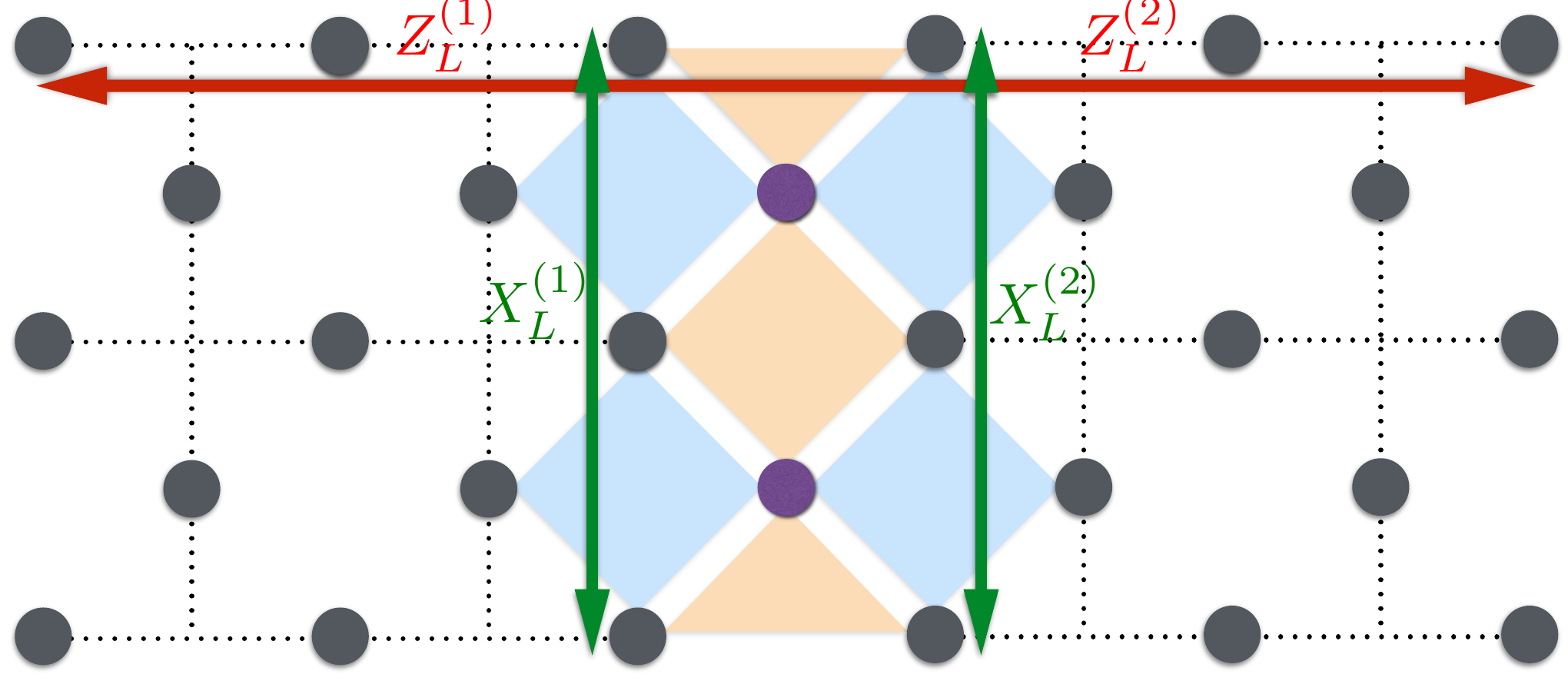}
	\caption{A rough merge. Purple qubits are initialised in {$\ket{+}$}. Measuring $X$ plaquette operators (orange) across the join realises a $X_L\ssur{1} \otimes X_L\ssur{2}$ measurement.}
	\label{merge}
\end{figure}

The action of the rough merge on the $X_L\ssur{i}$ operators is more subtle.
Consider the case where both parent surfaces are in the positive frame.
Measuring $X_L\ssur{1} X_L\ssur{2}$ by performing the merge tells us whether the parent $X_L$ operators are the same ($+1$) or different ($-1$): {we call this \emph{the outcome} of the merge}.
In the ``positive branch'' ($+1$ outcome), all possible $X_L$ operators on the child surface are identical: $X_L\ssur{1} \equiv X_L\ssur{2} \equiv X_L\ssur{3}$.
If, however, the $X_L\ssur{1} X_L\ssur{2}$ measurement outcome is $-1$ {(the ``negative branch'')}, we have $X_L\ssur{1} \equiv -X_L\ssur{2}$. It is as if one half of the child memory is subject to a string of $Z$ errors. The choice {to correct} from the join either to the left or to the right boundary represents a choice of what logic the merge implements. Either $X_L\ssur{3} \equiv X_L\ssur{1} \equiv -X_L\ssur{2}$, or $X_L\ssur{3} \equiv X_L\ssur{2} \equiv -X_L\ssur{1}$, depending on whether we adapt the Pauli frame of the child memory using a chain of $Z$ operations to the right or to the left respectively. These choices differ by a $Z_L\ssur{3}$ operation, {corresponding to} the difference in reference frames.

Described as a CPTP map, the logical transformation of the rough merge is a map from two-qubit density operators to one-qubit density operators $
  \mathsf M_R(\rho) \,=\, K_{0,R}^{\phantom\dagger} \,\rho\, K_{0,R}^\dagger + K_{1,R}^{\phantom\dagger} \,\rho\, K_{1,R}^\dagger  $. Each Kraus operator $K_{b,R}$ is the transformation in the case of a \smash{$M = (-1)^b$} measurement outcome {of $X_L\ssur{1}X_L\ssur{2}$.}
  If we represent the classical outcome $M = \pm 1$ itself by a decohered quantum system, the operation is a channel
\begin{equation}
  \tilde{\mathsf M}_R(\rho) \,=\,  \Bigl( K_{0,R}^{\phantom\dagger} \,\rho\, K_{0,R}^\dagger \otimes \ket{\texttt{+1}}\!\!\bra{\texttt{+1}}_M\Bigr) \,+\, \Bigl(K_{1,R}^{\phantom\dagger} \,\rho\, K_{1,R}^\dagger \otimes \ket{\texttt{-1}}\!\!\bra{\texttt{-1}}_{M}\Bigr)
\end{equation}
where the $M$ system heralds the effect on the merged parent memories.

The form of the Kraus operators $K_{0,R}$ and $K_{1,R}$ depends on the choice of reference frame for the child surface.
In both cases, we have for the outcome $M = +1$ the Kraus operator
$  K_{0,R} \;=\; \ket{\texttt{+}}\!\!\bra{\texttt{++}} \;+\; \ket{\texttt{-}}\!\!\bra{\texttt{--}} $.
Depending on whether one adapts the Pauli frame {of} the child on the left or the right for the outcome $M = -1$, we respectively obtain the Kraus operator
$K_{1,R} \;=\; \ket{\texttt{+}}\!\!\bra{\texttt{-+}} \;+\; \ket{\texttt{-}}\!\!\bra{\texttt{+-}}$
or $ K_{1,R} \;=\; \ket{\texttt{+}}\!\!\bra{\texttt{+-}} \;+\; \ket{\texttt{-}}\!\!\bra{\texttt{-+}}$.
{This represents the effect of the rough merge for all measurement outcomes, as a CPTP map.}

A \textbf{smooth merge} performs the corresponding {operation, interchanging} the horizontal and vertical axes, and also $Z$ and $X$. {An} interstitial row of qubits are prepared in the $\ket{0}$ state, {and performing} $Z$-plaquette measurements across a horizontal join between two parent memories realises a measurement of $Z_L\ssur{1} Z_L\ssur{2}$ {with outcome} $M = \pm 1$.
The observables on the original and new memories satisfy $X_L\ssur{3} \equiv X_L\ssur{1} X_L\ssur{2}$, and either $Z_L\ssur{3} \equiv Z_L\ssur{1} \equiv Z_L\ssur{2}$ ({if~$M = +1$, i.e.~the positive branch}) or $Z_L\ssur{3} \equiv \pm Z_L\ssur{1} \equiv \mp Z_L\ssur{2}$ (if $M = -1$, where the signs {depend on the} choice of Pauli frame for the child memory).
{The effect on the logical state is} a CPTP map
$  \mathsf M_S(\rho) \,=\, K_{0,S}^{\phantom\dagger} \,\rho\, K_{0,S}^\dagger + K_{1,S}^{\phantom\dagger} \,\rho\, K_{1,S}^\dagger  $
where $K_{0,S}$ is realised if $M = +1$ and $K_{1,S}$ if $M = -1$. We have
$  K_{0,S} \;=\; \ket{0}\!\!\bra{00} \;+\; \ket{1}\!\!\bra{11}$
and {(depending on how we adapt the Pauli frame of the child memory)} one of the two choices
$  K_{1,S} \;=\; \ket{0}\!\!\bra{10} \;+\; \ket{1}\!\!\bra{01}$
or
$  K_{1,S} \;=\; \ket{0}\!\!\bra{01} \;+\; \ket{1}\!\!\bra{10}$.

The operations of {lattice surgery are very different} from those usually found in discussions of quantum computing. Describing split and merge operations {with standard circuits} is unwieldy, as they are explicitly non-unitary; {the merge operation in particular is intrinsically non-deterministic}.
As one may realise a \cnot by a smooth split of the control qubit, followed by a rough merge with {the target~\cite{melattice},} lattice surgery is usually presented as a way of realising operations in the unitary circuit model. However, a native language of splitting and merging for design, verification, and optimization of lattice surgery protocols would be 
valuable for the effective management of resources involving these operations~\cite{herrnoridevitt}.

\section{The ZX calculus}

The ZX calculus is a notation together with a system of transformations, for reasoning about tensors in terms of complementary bases~\cite{monster,zxbook}.
The ZX calculus can fruitfully be applied to certain processes in quantum information theory: specifically, the standard model of the calculus (using eigenstates of the Pauli $X$ and $Z$ operators) is effective for reasoning about stabilizer-like quantum operations~\cite{backens2014zx}, which are useful for describing transformations of Pauli observables.
In this section, we provide a brief introduction to the ``ZX calculus'' suitable for a non-specialist reader who is familiar with the quantum circuit model.
(A more complete introduction is available from Refs.~\cite{zxwebsite} and~\cite{zxbook}.)

\subsection{Using ZX diagrams to denote tensors}

The notation of the ZX calculus consists of graphs with coloured nodes --- which represent operations, including preparations and projections --- connected by edges,  representing qubits.
These graphs denote tensors in the same way that circuit diagrams do, and can often be interpreted as a sequence of linear operators acting on a state-vector.
(In the ZX diagrams of this article, the time axis runs left to right.)
In its simplest form, the ZX calculus includes nodes of only two colours --- conventionally red and green, where `red' is the darker shade --- representing two choices of basis in which information may be stored and transformed.
For example, preparations and measurements may be denoted by
\begin{equation}{}\!\!\!\!
\begin{aligned}
\begin{aligned}[t]
  \includegraphics[angle=-90]{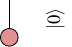}
\end{aligned}
\quad&
\begin{aligned}[t]
  \includegraphics[angle=-90]{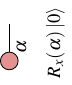}
\end{aligned}
&\qquad\;
\begin{aligned}[t]
  \includegraphics[angle=-90]{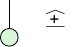}
\end{aligned}
\quad&
\begin{aligned}[t]
  \includegraphics[angle=-90]{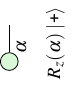}
\end{aligned}
&\qquad
\begin{aligned}[t]
  \includegraphics[angle=-90]{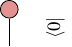}
\end{aligned}
\quad&
\begin{aligned}[t]
  \includegraphics[angle=-90]{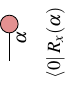}
\end{aligned}
&\qquad
\begin{aligned}[t]
  \includegraphics[angle=-90]{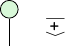}
\end{aligned}
\quad&
\begin{aligned}[t]
  \includegraphics[angle=-90]{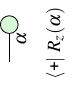}
\end{aligned}
\end{aligned}
\label{eqn:lollipops}\!\!\!\!
\end{equation}~\\
where each diagram fragment represents the operator written below it.\footnote{%
  More precisely, the diagrams of Eqn.~\eqref{eqn:lollipops} actually represent operators $\sqrt 2\ket{0}$, $\sqrt 2\!\!\:\bra{0}$, $\sqrt 2\ket{\texttt+}$, $\sqrt 2\!\!\:\bra{\texttt+}$\,, and so forth: see~\emph{e.g.}~Ref.~\cite{Backens-2015}.
  This distinction is unimportant to our results, and the cumulated scalar $2^{k/2}$ for a diagram can be easily inferred from its topology.
}
(Setting $\alpha = \pi$ in each case yields the state or projector orthogonal to the corresponding unlabelled node; we omit angles as labels when they are multiples of $2\pi$.)
The other basic nodes of the ZX calculus are as follows:

\begin{equation}
\begin{aligned}
\begin{aligned}[t]~\\[-2.5ex]
  \includegraphics[angle=-90]{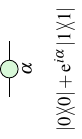}
\end{aligned}
\qquad&&\quad
\begin{aligned}[t]
  \includegraphics[angle=-90]{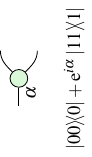}
\end{aligned}
\qquad&&\quad
\begin{aligned}[t]
  \includegraphics[angle=-90]{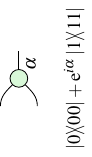}
\end{aligned}
\\[.5ex]
\begin{aligned}[t]~\\[-2.5ex]
  \includegraphics[angle=-90]{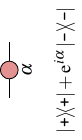}
\end{aligned}
\qquad&&\quad
\begin{aligned}[t]
  \includegraphics[angle=-90]{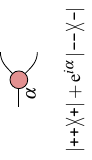}
\end{aligned}
\qquad&&\quad
\begin{aligned}[t]
  \includegraphics[angle=-90]{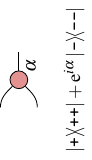}
\end{aligned}
\end{aligned}
\label{ZXnodes}
\end{equation}~\\
Note that each of these nodes come in adjoint pairs, as follows (blank nodes denote either red or green, with the same colour throughout each equation):
\begin{equation}
\biggl[
\begin{aligned}~\\[-2.5ex]
\;\;
  \includegraphics[angle=-90]{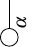}
\;\;
\\[-2.5ex]~
\end{aligned}
\biggr]^{\textstyle\dagger}
=\;\;
\begin{aligned}~\\[-2.5ex]
  \includegraphics[angle=-90]{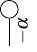}
\;
\\[-2.5ex]~
\end{aligned}
\;;
\qquad
\biggl[
\begin{aligned}~\\[-2.5ex]
\;\;
  \includegraphics[angle=-90]{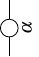}
\;\;
\\[-2.5ex]~
\end{aligned}
\biggr]^{\textstyle\dagger}
=\;\;
\begin{aligned}~\\[-2ex]
  \includegraphics[angle=-90]{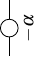}
\;
\\[-2.5ex]~
\end{aligned}
\;;
\qquad
\biggl[
\begin{aligned}~\\[-3.5ex]
\;\;
  \includegraphics[angle=-90]{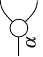}
\;\;
\\[-2.5ex]~
\end{aligned}
\biggr]^{\textstyle\dagger}
=\;\;
\begin{aligned}~\\[-3ex]
  \includegraphics[angle=-90]{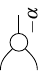}
\\[-2.5ex]~
\end{aligned}
\;.
\label{eqn:dagger}
\end{equation}~\\[-1ex]
We can generalise the nodes above to ones with larger in-degree and out-degree (conventionally known as ``spiders''~\cite{Coecke2008}), which for any given angle $\alpha$ corresponds to ${\ket{0}^{\!\otimes k} + \mathrm{e}^{i\alpha} \ket{1}^{\!\otimes k}}$ for the green nodes, and ${\ket{\texttt{+}}^{\!\otimes k} + \mathrm{e}^{i\alpha} \ket{\texttt{-}}^{\!\otimes k}}$ for the red nodes (taking the transpose of those tensor factors representing inputs rather than outputs).
Considering nodes {merely} as tensors, it is not important  whether a given wire represents a ``bra'' or a ``ket''; we may be agnostic about their direction, and even allow them to run vertically without ambiguity as to how to evaluate the tensor contraction with an index of {another tensor}.

Note that the left-hand diagrams of Eqn.~\eqref{ZXnodes} are $Z$ and $X$ rotations, respectively.
Then nodes of degree $2$ suffice to describe arbitrary single-qubit operations by their Euler decomposition.
To show how we may represent arbitrary unitary operators with this notation, it suffices to demonstrate a decomposition of a \cnot operator.
We may do this up to a scalar factor in the ZX notation as follows:
\vspace*{4ex}%
\begin{equation}
\begin{aligned}[b]
  \smash{
  \begin{aligned}
  ~\\[-4ex]
      \includegraphics[angle=-90]{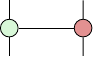}
    \\[-3ex]~ 
  \end{aligned}
  }
  	 \;\;\; {}={} \;\;&
	   \ket{0}\!\!\bra{0} \otimes \langle 0 \vert \texttt{+} \rangle \otimes \ket{\texttt{+}}\!\!\bra{\texttt{+}} \;+\;
	   \ket{0}\!\!\bra{0} \otimes \langle 0 \vert \texttt{-} \rangle \otimes \ket{\texttt{-}}\!\!\bra{\texttt{-}}
    \\&+\;
      \ket{1}\!\!\bra{1} \otimes \langle 1 \vert \texttt{+} \rangle \otimes\ket{\texttt{+}}\!\!\bra{\texttt{+}} \;+\;
	  \ket{1}\!\!\bra{1} \otimes \langle 1 \vert \texttt{-} \rangle \otimes \ket{\texttt{-}}\!\!\bra{\texttt{-}}
	\\[2ex]
	 {}={}\;&
	   \tfrac{1}{\sqrt 2} \ket{0}\!\!\bra{0} \otimes I \;+\;
	   \tfrac{1}{\sqrt 2} \ket{1}\!\!\bra{1} \otimes X
	 \;=\; \tfrac{1}{\sqrt 2}\,\mathrm{CNOT}.
  \end{aligned}
\label{cnot}
\end{equation}~\\[-1ex]
To represent \cnot exactly using ZX, we would multiply this by a scalar factor of $\sqrt 2$, \emph{e.g}~by including a diagram fragment which is equivalent to the scalar $\sqrt 2$.
Such scalar factors are often omitted as a minor abuse of notation when describing operators using the ZX calculus, but they play an important role for our results, in understanding such diagrams as representing Kraus operators.
In similar ways to \eqref{cnot} we may represent any unitary operator with the ZX calculus, using standard results in circuit decomposition.

\subsection{Properties of the ZX calculus}

As a tensor notation, the ZX calculus enjoys several convenient properties which correspond to abstract representations of equality tests and copying in either in the $Z$ or the $X$ eigenbases.
\begin{subequations}\allowdisplaybreaks
\label{eqn:bialgebras}%
These properties 
are as follows --- where throughout, blank nodes denote either red or green (with the same colour throughout each equation), provided that the sum of the angles in each diagram are equivalent mod $2\pi$:
\vspace*{0ex}\label{eq:frob}
\begin{gather}{}
\mspace{-18mu}
\begin{aligned}~\\[-5ex]
  \includegraphics[angle=-90]{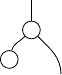}
\end{aligned}
\quad=\;\;\;
\begin{aligned}~\\[-5ex]
  \includegraphics[angle=-90]{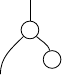}
\end{aligned}
\quad=\quad
\begin{aligned}~\\[-5ex]
  \includegraphics[angle=-90]{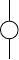}
\end{aligned}
\quad=\quad
\begin{aligned}~\\[-5ex]
  \includegraphics[angle=-90]{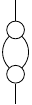}
\end{aligned}
\quad=\quad
\begin{aligned}~\\[-2.5ex]
  \includegraphics[angle=90]{ZXfigs/handbagBlank-left.pdf}
\end{aligned}
\;\;\;=\quad
\begin{aligned}~\\[-2.5ex]
  \includegraphics[angle=90]{ZXfigs/handbagBlank-right.pdf}
\end{aligned}
\;\;;
\label{eqn:Frobenius1-1}
\\[1ex]
\begin{aligned}~\\[-5ex]
  \includegraphics[angle=-90]{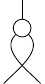}
\end{aligned}
\quad=\quad
\begin{aligned}~\\[-5ex]
  \includegraphics[angle=-90]{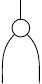}
\end{aligned}
\;\;;
\qquad
\begin{aligned}~\\[-2.5ex]
  \includegraphics[angle=90]{ZXfigs/mergeBlank-twist.pdf}
\end{aligned}
\quad=\quad
\begin{aligned}~\\[-2.5ex]
  \includegraphics[angle=90]{ZXfigs/mergeBlank-long.pdf}
\end{aligned}
\;\;;
\label{eqn:CommutativeLaws}
\\[1ex]
\begin{aligned}~\\[-6ex]
  \includegraphics[angle=-90]{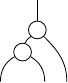}
\end{aligned}
\quad=\quad
\begin{aligned}~\\[-4ex]
  \includegraphics[angle=-90]{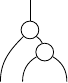}
\end{aligned}
\;\;;
\qquad
\begin{aligned}~\\[-2ex]
  \includegraphics[angle=90]{ZXfigs/assocBlank-left.pdf}
\end{aligned}
\quad=\quad
\begin{aligned}~\\[-3ex]
  \includegraphics[angle=90]{ZXfigs/assocBlank-right.pdf}
\end{aligned}
\;\;;
\label{eqn:AssociativeLaws}
\\[2ex]
\begin{aligned}~\\[-6.25ex]
  \includegraphics[angle=-90]{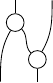}
\end{aligned}
\quad=\quad
\begin{aligned}~\\[-5ex]
  \includegraphics[angle=-90]{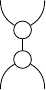}
\end{aligned}
\quad=\quad
\begin{aligned}~\\[-6.25ex]
  \includegraphics[angle=-90]{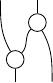}
\end{aligned}
\;\;.
\label{eqn:FrobeniusLaws}
\end{gather}~\\[-0ex]
\end{subequations}
These properties 
allow us to reduce nodes of higher in- and out-degree in arbitrary ways, using the nodes of Eqns.~\eqref{eqn:lollipops} and~\eqref{ZXnodes}, so long as the composite diagram has the correct in- and out-degree and the same total angle~\cite[Ch.\ 8.6.1]{zxbook}.
For instance, using Eqns.~\eqref{eqn:AssociativeLaws} and~\eqref{eqn:FrobeniusLaws}, we may without confusion define nodes of degree 4 from compositions of nodes of degree 3 (where again we require that mod~$2\pi$ sum of the phases remains constant through each equation):
\vspace*{-1.5ex}
\begin{subequations}%
\begin{gather}{}
\mspace{-18mu}
\begin{aligned}~\\[-6ex]
  \includegraphics[scale=0.75,angle=-90]{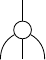}
\end{aligned}
\;:=\;
\begin{aligned}~\\[-6ex]
  \includegraphics[scale=0.75,angle=-90]{ZXfigs/assocBlank-left.pdf}
\end{aligned}
\;=\;
\begin{aligned}~\\[-4ex]
  \includegraphics[scale=0.75,angle=-90]{ZXfigs/assocBlank-right.pdf}
\end{aligned}
\,\,;
\\
\begin{aligned}~\\[-6ex]
  \includegraphics[scale=0.75,angle=-90]{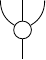}
\end{aligned}
\;:=\;
\begin{aligned}~\\[-2ex]
  \includegraphics[scale=0.75,angle=90]{ZXfigs/assocBlank-left.pdf}
\end{aligned}
\;=\;
\begin{aligned}~\\[-3ex]
  \includegraphics[scale=0.75,angle=90]{ZXfigs/assocBlank-right.pdf}
\end{aligned}
\,\,;
\\[1ex]
\begin{aligned}~\\[-6ex]
  \includegraphics[scale=0.75,angle=-90]{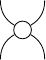}
\end{aligned}
\!\;:=\,
\begin{aligned}~\\[-6.25ex]
  \includegraphics[scale=0.75,angle=-90]{ZXfigs/zcurveBlank.pdf}
\end{aligned}
\,=\,
\begin{aligned}~\\[-5ex]
  \includegraphics[scale=0.75,angle=-90]{ZXfigs/bneckBlank.pdf}
\end{aligned}
\,=\,
\begin{aligned}~\\[-6.25ex]
  \includegraphics[scale=0.75,angle=-90]{ZXfigs/scurveBlank.pdf}
\end{aligned}
\,\,.\;\;
\mspace{-18mu}
\end{gather}~\\[-2ex]
\end{subequations}%
We may recursively define nodes of any degree in this way.
This is the so-called \emph{Spider {Law}}, {and also} allows us to simplify diagrams by merging nodes of degree $1$ and $2$ of the same colour by adding their angles~\cite[Ch.\ 9.4]{zxbook}. Eqn.~\eqref{eqn:Frobenius1-1} also allows for simplification by reducing the number of nodes: {for instance,} if the sum of the angles is a multiple of $2\pi$, the resulting operation simply represents the identity operator $\mathbbm 1$, in which case we may remove the node without changing the meaning of the diagram.

Finally, for angles $\alpha = k \pi$ for integer $k$, we may describe a simple rule for commuting
nodes of degree $2$, past differently coloured nodes of degree $3$:
\begin{equation}
\begin{aligned}~\\[-5ex]
  \includegraphics[angle=-90]{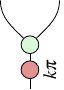}
\end{aligned}
\;\;=\;\;
\begin{aligned}~\\[-5ex]
  \includegraphics[angle=-90]{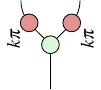}
\end{aligned}
\qquad\qquad\qquad
\begin{aligned}~\\[-5ex]
  \includegraphics[angle=-90]{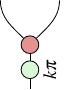}
\end{aligned}
\;\;=\;\;
\begin{aligned}~\\[-5ex]
  \includegraphics[angle=-90]{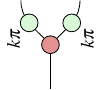}
\end{aligned}
\label{copy}
\end{equation}~\\[-2ex]%
The reflection of Eqn.~\eqref{copy} about the vertical axis also holds, and represents the sense in which the nodes of one colour correspond to simple arithmetic operations on information stored in the ``distinguished basis'' of the other colour.
These rules in effect provide a reduced instruction set for transformation and evaluation of tensors, which is well-suited for automated reasoning about equivalence of quantum procedures \cite{kissinger2015quantomatic}.

\subsection{A simple demonstration of how to use the ZX calculus}

We now provide simple illustrations of the way in which the ZX calculus can be used as a tool for computation, by demonstrating an elementary fact about the stabiliser formalism.
Our aim here is not to argue that the ZX calculus is in some way a superior tool to prove this result, but instead to \emph{demonstrate} how the ZX calculus produces this result --- and in so doing, demonstrate the transformations which we use in our results.
(For a more complete introduction to ZX, the interested reader is invited to refer to Refs.~\cite{zxwebsite} or~\cite{zxbook}.)

Our demonstration is to show how to derive how to commute Pauli operators past \cnot gates, using the ``Spider Rule'' (the higher-level result following from Eqns.~\eqref{eqn:bialgebras} allowing one to merge any two connected nodes of the same colour) together with Eqn.~\eqref{copy}.
The Spider Rule describes how we can merge, or split, two nodes of the same colour while keeping the total phase the same (mod~$2\pi$).
For example, with a node of degree $3$ and a node of degree $2$, where the node of degree $3$ happens not to have a phase of zero, we have:
\vspace*{-0.5ex}
\begin{equation}
\begin{alignedat}{2}{}\!\!\!\!
&\;
\begin{aligned}
\begin{aligned}[t]~\\[-4.5ex]
  \includegraphics[angle=-90]{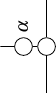}
\end{aligned}
\end{aligned}
\\[.75ex]
&\;\;\;\;\quad\Big\updownarrow
\\
\begin{aligned}
\begin{aligned}[t]~\\[-4.5ex]
  \includegraphics[angle=-90]{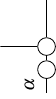}
\end{aligned}
\end{aligned}
\;\longleftrightarrow{}&\;
\begin{aligned}
\begin{aligned}[t]~\\[-4.5ex]
  \includegraphics[angle=-90]{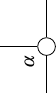}
\end{aligned}
\end{aligned}
&\longleftrightarrow{}&\;
\begin{aligned}
\begin{aligned}[t]~\\[-4.5ex]
  \includegraphics[angle=-90]{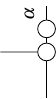}
\end{aligned}
\end{aligned}
~\\[-1.5ex]
\!\!\!\!
\label{phaseNodeCommute}
\end{alignedat}
\end{equation}
where the white nodes again stand for either red or green nodes (the same colour throughout).
This effectively allows us to ``commute'' a degree~2 node past a degree~3 node of the same colour.
Furthermore Eqn.~\eqref{copy} essentially describes how $X$ and $Z$ operators propagate between controls and targets of \cnot gates, as degree-2 nodes with a $\pi$ phase are either $X$ or $Z$ operators:
\vspace*{-2ex}
\begin{equation}
\begin{aligned}
\begin{aligned}[t]~\\[-2.5ex]
  \includegraphics[angle=-90]{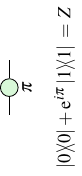}
\end{aligned}
\qquad&&\qquad
\begin{aligned}[t]~\\[-2.5ex]
  \includegraphics[angle=-90]{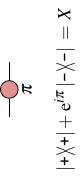}
\end{aligned}
\end{aligned}
\label{piNodes}
\end{equation}~\\[-1ex]
We may then verify, (\emph{i.e.},~compute from first principles) how to commute $Z$ and $X$ operators past a \cnot gate:
\begin{equation}
\begin{aligned}
\begin{aligned}
\begin{aligned}[t]~\\[-4.5ex]
  \includegraphics[angle=-90]{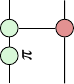}
\end{aligned}
\end{aligned}
\,&\to\,
\begin{aligned}
\begin{aligned}[t]~\\[-4.5ex]
  \includegraphics[angle=-90]{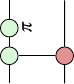}
\end{aligned}
\end{aligned}
\;;
&\qquad\quad
\begin{aligned}
\begin{aligned}[t]~\\[-4.5ex]
  \includegraphics[angle=-90]{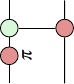}
\end{aligned}
\end{aligned}
\,&\to\,
\begin{aligned}
\begin{aligned}[t]~\\[-4.5ex]
  \includegraphics[angle=-90]{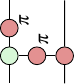}
\end{aligned}
\end{aligned}
\,\to\,
\begin{aligned}
\begin{aligned}[t]~\\[-4.5ex]
  \includegraphics[angle=-90]{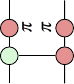}
\end{aligned}
\end{aligned}
\;;
\\[3ex]
\begin{aligned}
\begin{aligned}[t]~\\[-4.5ex]
  \includegraphics[angle=-90]{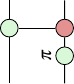}
\end{aligned}
\end{aligned}
\,&\to\,
\begin{aligned}
\begin{aligned}[t]~\\[-4.5ex]
  \includegraphics[angle=-90]{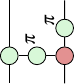}
\end{aligned}
\end{aligned}
\,\to\,
\begin{aligned}
\begin{aligned}[t]~\\[-4.5ex]
  \includegraphics[angle=-90]{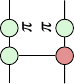}
\end{aligned}
\end{aligned}
\;;
&\qquad\quad
\begin{aligned}
\begin{aligned}[t]~\\[-4.5ex]
  \includegraphics[angle=-90]{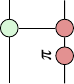}
\end{aligned}
\end{aligned}
\,&\to\,
\begin{aligned}
\begin{aligned}[t]~\\[-4.5ex]
  \includegraphics[angle=-90]{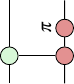}
\end{aligned}
\end{aligned}
\;.
\end{aligned}
\label{phaseCNOTcommute}
\end{equation}~\\[-1ex]
The equations (expressed as rewritings of diagrams) in the left-hand column  represent how $Z$ operators commute with the control of a \cnot, and propagate from the target to the control \cnot; the equations of the right-hand column represent how $X$ operators commute with the target of a \cnot, and propagate from the control to the target of a \cnot.
Thus, in the ZX calculus, the propagation of Pauli operators is a consequence of the rules for how the two colours of nodes interact, and can be demonstrated without matrix calculations.

\section{Lattice surgery in the ZX calculus}

Having in hand an account of lattice surgery using encoded Kraus operators, and given the presentation above of the tensors involved in the standard model of the ZX calculus, it should be clear that there are considerable similarities between them. A lattice ``split'' operation copies information represented in one of two bases, as shown in Section~\ref{sec:split}.
This mirrors the action {of Eqn.~\eqref{copy}}: for instance, a red $1$-to-$2$ node copies a green $\pi$-rotation, representing both how a change in the input maps to an output, and {also how} an input $Z$ observable is equivalent to a product of $Z$ observables at the output.
It has also been previously observed that the red node in the \cnot of Eqn.~\eqref{cnot} acts as an \xor \cite[Ch. 5.3.5]{zxbook}, which is also the action of a rough merge.
We thus appear to have an equivalence between red degree-$3$ nodes in the ZX calculus with rough lattice surgery operations, and between green degree-$3$ nodes and smooth operations.

We now make this equivalence precise and explicit, demonstrating that the actions of lattice surgery on encoded data essentially form a model of the ZX calculus.
The precise nature of the equivalence requires us to describe the merge operations in terms of \emph{ensembles} of simple diagrams, which yield byproduct operations associated with the heralded outcomes of the merge process.
The ZX calculus then provides us with a way to simplify the descriptions of these byproduct operations, allowing us to account for them in much more complicated procedures than the standard realisation of the \cnot gate. 

In what follows, a \emph{lattice surgery procedure} is a composition of the maps $\mathsf M_S$ and $\mathsf M_R$ (albeit possibly using different conventions for updating the Pauli frame in each instance) with $\mathsf S_S$ and $\mathsf S_R$, {and preparations of fresh qubits prepared in states $\ket{g_\alpha} \propto {\ket{0} + \mathrm{e}^{i\alpha} \ket{1}}$ and $\ket{r_\alpha} \propto {\ket{\texttt{+}} + \mathrm{e}^{i\alpha} \ket{\texttt{-}}}$.}

\begin{lemma}
The effect on the logical state space of a surface code of a smooth split, and of the positive branch of a smooth merge, are given by the operators associated with the green degree-3 nodes in Eqn.~\eqref{ZXnodes} for $\alpha = 0$; and similarly for the rough split/merge operations and the red degree-3 nodes.
\end{lemma}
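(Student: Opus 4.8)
The plan is to prove the statement by a direct comparison of operators, since both sides of the claimed correspondence have already been made explicit. The logical actions of the split and merge were derived above as concrete operators on $\h$ and $\h\otimes\h$ (the embeddings $U_S$, $U_R$ and the Kraus operators $K_{0,S}$, $K_{0,R}$), while the operators represented by the arity-$3$ ZX nodes are tabulated in Eqn.~\eqref{ZXnodes} for arbitrary $\alpha$. The lemma thus reduces to evaluating the node expressions at $\alpha=0$ and checking that they coincide, normalisation included, with the surgery operators.

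First I would fix the colour/basis dictionary and justify it from the surgery analysis rather than merely assert it. Green nodes are defined relative to the $Z$ eigenbasis $\{\ket{0},\ket{1}\}$, and the smooth operations are exactly those acting diagonally in this basis: the smooth split satisfies $Z_L\ssur{0}\equiv Z_L\ssur{1}\equiv Z_L\ssur{2}$ and $X_L\ssur{0}\equiv X_L\ssur{1}X_L\ssur{2}$, i.e.\ it copies the $Z$ observable and distributes the $X$ observable as a product across the two outputs, which is precisely the behaviour of a green $1$-to-$2$ spider. Dually, red nodes are defined relative to the $X$ eigenbasis, and the rough operations copy the $X$ observable and distribute $Z$ as a product, so that rough corresponds to red. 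Establishing this pairing is what makes the subsequent term-by-term comparison meaningful.

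With the dictionary fixed, the verification is immediate. Setting $\alpha=0$ (so that $\mathrm{e}^{i\alpha}=1$) in the green $1$-to-$2$ node of Eqn.~\eqref{ZXnodes} gives $\ket{00}\bra{0}+\ket{11}\bra{1}$, which is exactly the embedding $U_S$ of the smooth split; the green $2$-to-$1$ node gives $\ket{0}\bra{00}+\ket{1}\bra{11}=K_{0,S}$, the positive-branch Kraus operator of the smooth merge. The same substitution in the red nodes yields $\ket{\texttt{++}}\bra{\texttt{+}}+\ket{\texttt{--}}\bra{\texttt{-}}=U_R$ and $\ket{\texttt{+}}\bra{\texttt{++}}+\ket{\texttt{-}}\bra{\texttt{--}}=K_{0,R}$, matching the rough split and the positive branch of the rough merge respectively.

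The genuinely substantive content lies not in this matching but upstream of it, in the stabiliser and Pauli-frame bookkeeping that pins down $U_S$, $U_R$, $K_{0,S}$ and $K_{0,R}$ in the first place; granting those derivations, the lemma is a verification. The one point I would check carefully is normalisation: both the arity-$3$ nodes and the surgery operators are written with no scalar prefactor, so I must confirm that no discrepancy analogous to the $\tfrac{1}{\sqrt2}$ appearing in the CNOT of Eqn.~\eqref{cnot} is hidden here. Since each operator is a sum of two rank-one terms with unit coefficients on both sides, the match is exact; and restricting to the positive ($M=+1$) branch of the merge, which carries no byproduct operation, is exactly what isolates the $\alpha=0$ spider from its negative-branch counterpart.
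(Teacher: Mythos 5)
Your proposal is correct and takes essentially the same route as the paper's own proof, which likewise establishes the lemma by directly identifying the split embeddings $U_S$, $U_R$ and the positive-branch Kraus operators $K_{0,S}$, $K_{0,R}$ with the green and red $1$-to-$2$ and $2$-to-$1$ nodes of Eqn.~\eqref{ZXnodes} at $\alpha = 0$. Your additional remarks on the colour/basis dictionary and the absence of any hidden scalar factor are sound elaborations of the same term-by-term comparison, not a different argument.
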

\begin{proof}
  For the split operations, this follows from the equality of the operators associated to the {$1$-to-$2$} nodes in Eqn.~\eqref{ZXnodes} for $\alpha = 0$, and the corresponding Kraus operators $U_S$ and $U_R$ (respectively) of the unitary embeddings $\mathsf S_S$ and $\mathsf S_R$ of Section~\ref{split}.
  Similarly, the Kraus operators $K_{0,S}$ and $K_{0,R}$ (respectively) of the merge operators $\mathsf M_S$ and $\mathsf M_R$ of Section \ref{merge} are equal to the operators associates to (respectively) the green and the red {$2$-to-$1$} nodes in Eqn.~\eqref{ZXnodes}.
\end{proof}
\noindent
The preceding observation about identical pairs of linear operators, has the following consequence on how we can reason about compositions of operations in lattice surgery:
\begin{corollary}
  \label{cor:blimeyZXpositivelyIsLatticeSurgery}
  The positive branches of lattice surgery procedures (i.e.~conditioned on $+1$ outcomes of all merge operations) provide a model for {the equational theory of the ZX calculus} without projections, in which nodes of degree $2$ or $3$ have an angle of $0$.
\end{corollary}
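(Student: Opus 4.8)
The plan is to upgrade the operator equalities of the preceding Lemma into a compositional (symmetric monoidal) interpretation of the restricted calculus, and then to read off the result from the soundness of the ZX rewrite rules. Concretely, I would assign to each generator of the stated fragment a lattice surgery primitive: the arity-$1$ green and red nodes of angle $\alpha$ to the preparations $\ket{g_\alpha}$ and $\ket{r_\alpha}$; the arity-$2$ nodes of angle $0$ to the identity channel on a logical qubit; and the arity-$3$ nodes of angle $0$ to the splits $\mathsf S_S, \mathsf S_R$ and to the positive-branch Kraus operators $K_{0,S}, K_{0,R}$ of the merges $\mathsf M_S, \mathsf M_R$. (Note that only preparations, and not projections, appear as arity-$1$ generators, matching the definition of a lattice surgery procedure.) By the Lemma, together with the definitions of $\ket{g_\alpha}$ and $\ket{r_\alpha}$, each such primitive acts on the logical state space by exactly the linear operator that the standard interpretation $\lsem\,\cdot\,\rsem$ assigns to the corresponding node --- including the scalar factors, which agree on the nose with the scalings fixed in Eqns.~\eqref{eqn:lollipops} and~\eqref{ZXnodes}.

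Next I would verify that this assignment extends to a monoidal functor on the whole fragment, i.e.\ that the logical effect of a composite lattice surgery procedure is the matching composite of these linear operators. Sequential composition of procedures composes their logical maps and performing operations on disjoint memories tensors them; the only delicate point is the positive-branch conditioning. Since each merge contributes a factor $K_{0,\bullet}$ precisely when its heralded outcome is $+1$, the branch of a composite procedure in which \emph{all} merges report $+1$ is exactly the product of the individual positive-branch operators, with the scalar normalisation inherited from those factors. Hence the logical-effect assignment agrees with $\lsem\,\cdot\,\rsem$ on generators and respects both composition and tensoring, so it coincides with $\lsem\,\cdot\,\rsem$ on every diagram of the fragment.

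It then remains to invoke soundness of the equational theory. The spider law, the Frobenius identities of Eqn.~\eqref{eqn:bialgebras}, and the copy rule of Eqn.~\eqref{copy} all hold as exact operator equalities under $\lsem\,\cdot\,\rsem$ with the present scalings; for instance, $K_{0,S}\,U_S = \ket{0}\!\bra{0} + \ket{1}\!\bra{1} = I$ realises the ``special'' law (split followed by merge is the identity), and the remaining axioms are verified analogously. Consequently, whenever two diagrams built from the allowed generators are provably equal in the ZX calculus, their standard interpretations coincide, and therefore so do the logical effects of the associated positive-branch lattice surgery procedures. This is exactly the assertion that the positive branches furnish a model of the restricted equational theory.

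I expect the main obstacle to lie in the second step rather than the first or the third: making precise that ``restrict to the $+1$ branch'' commutes with composition, so that the logical-effect assignment is a genuine monoidal functor and not merely a map defined on generators. Care is also needed to confirm that the deliberately unnormalised scalings of Eqns.~\eqref{eqn:lollipops} and~\eqref{ZXnodes} make the rewrite rules hold as \emph{exact} operator identities and not merely up to a scalar, since it is precisely these scalar factors that let the diagrams be read as Kraus operators.
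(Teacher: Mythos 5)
Your proposal is correct and takes essentially the same route as the paper, which treats the corollary as an immediate consequence of the preceding Lemma's operator equalities and spells out your exact generator dictionary (splits, positive-branch merge Kraus operators, rotations, and $\ket{g_\alpha}$, $\ket{r_\alpha}$ preparations) in the ``in particular'' paragraph following the statement. The paper leaves the functoriality of positive-branch conditioning and the soundness of the rewrite rules implicit; your proposal simply makes those routine steps (and the scalar bookkeeping) explicit rather than arguing differently.
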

\noindent
In particular: if we let
  \begin{minipage}{1.25em}\hspace*{-.75ex}\rotatebox{-90}{%
    $\begin{aligned}\begin{tikzpicture}[scale=0.25]
      \node [draw=black,fill=white,circle,inner sep=0.3ex] (v) at (0,0) {};
      \draw [out=135,in=-90] (v) to ++(-2.5ex,5ex);
      \draw [out=45,in=-90] (v) to ++(2.5ex,5ex);
      \draw (v) to ++(0,-5ex);
    \end{tikzpicture}\end{aligned}$}%
  \end{minipage}\
  denote the action of the unitary embedding $U_S$ (respectively, $U_R$) on encoded data realised by a smooth (resp.~a rough) split,
  \begin{minipage}{1.25em}\hspace*{-.75ex}\rotatebox{-90}{%
    $\begin{aligned}\begin{tikzpicture}[scale=-0.25]
      \node [draw=black,fill=white,circle,inner sep=0.3ex] (v) at (0,0) {};
      \draw [out=135,in=-90] (v) to ++(-2.5ex,5ex);
      \draw [out=45,in=-90] (v) to ++(2.5ex,5ex);
      \draw (v) to ++(0,-5ex);
    \end{tikzpicture}\end{aligned}$}%
  \;\;\end{minipage}\
  denote the action of the Kraus operator $K_{0,S}$ (resp.~$K_{0,R}$) on encoded data realised in the positive branch by a smooth (resp.~a rough) merge,
    \smash{\begin{minipage}{1.25em}\hspace*{-.75ex}\raisebox{5ex}{\rotatebox{-90}{%
      $\begin{aligned}\begin{tikzpicture}[scale=0.25]
        \node [draw=none,circle,inner sep=0.3ex,
              outer sep=-2pt, label=right:\rotatebox{90}{\footnotesize$\alpha$}] (v) at (0,0) {};
        \draw (v) to ++(0,5ex);
        \draw (v) to ++(0,-5ex);
        \filldraw [draw=black,fill=white] (v) circle (1.75ex);
        \end{tikzpicture}\!\end{aligned}$}}%
  \end{minipage}}
  denote an $R_x(\alpha)$ gate (resp.~a $R_z(\alpha)$ gate), and
  \smash{\begin{minipage}{1.25em}\hspace*{-1ex}\raisebox{5ex}{\rotatebox{-90}{%
      $\begin{aligned}\begin{tikzpicture}[scale=0.25]
        \node [draw=none,circle,inner sep=0.3ex,
              outer sep=-2pt, label=right:\rotatebox{90}{\footnotesize$\alpha$}] (v) at (0,0) {};
        \draw (v) to ++(0,5ex);
        \filldraw [draw=black,fill=white] (v) circle (1.75ex);
        \end{tikzpicture}\!\end{aligned}$}}%
  \end{minipage}}
  denote (up to an extra factor of $\sqrt 2$) preparation of a $\ket{g_\alpha}$ state (resp.~a $\ket{r_\alpha}$ state), then these operations satisfy all of the properties of Eqns.~\eqref{eqn:dagger} and~\eqref{eqn:bialgebras}.
More complicated green or red nodes (including nodes of degree $>1$ with non-zero angles) can then be realised using the {Spider Law~\cite[Ch.\ 9.4]{zxbook}}, describing them as the effect of compositions of split and merge operations in the positive branch.

The qualification ``in the positive branch'' in Corollary~\ref{cor:blimeyZXpositivelyIsLatticeSurgery} is significant, and relates to the normalisation of the diagram in Eqn.~\eqref{cnot}.
Omitting the correction operations needed to realise \cnot deterministically, the standard lattice-surgical realisation of {\cnot~\cite{melattice}} is by the composition $(\mathbf{1} \otimes \mathsf M_R)(\mathsf S_S \otimes \mathbf{1})$.
We may associate to this composition a pair of Kraus operators $(I \otimes K_{0,R})(U_S \otimes I)$ and $(I \otimes K_{1,R})(U_S \otimes I)$, where the positive branch corresponds to the first of these.
A simple calculation reveals that
$  (I \otimes K_{0,R})(U_S \otimes I)  = \tfrac{1}{\sqrt 2} \,\mathrm{CNOT}$,
which reflects the fact that the positive branch occurs with probability $\tfrac{1}{2}$ and realises a \cnot operation on all input states.
From this standpoint, the subnormalisation of Eqn.~\eqref{cnot} is a feature {of ZX notation, not} a bug: it captures not only the way in which states transform in the positive branch of a lattice-surgery procedure, but also the $2$-norms of the Kraus operators which govern the transformation.

This motivates a view of simple ZX diagrams as denoting Kraus maps of lattice surgery procedures.
It remains to describe how to represent the {``negative branch''} of rough and smooth merges.
\begin{lemma}
  \label{lem:negativeBranchMaps}%
  \begin{subequations}%
  For a rough merge, if we adapt the Pauli frame of the child memory to agree with the second parent memory so that $K_{1,R} = \ket{\texttt{\upshape+}}\!\!\bra{\texttt{\upshape-+}} + \ket{\texttt{\upshape-}}\!\!\bra{\texttt{\upshape+-}}$ \;--- \textup{respectively:} if we adapt the Pauli frame of the child memory to agree with the first parent memory so that $K_{1,R} = \ket{\texttt{\upshape+}}\!\!\bra{\texttt{\upshape+-}} + \ket{\texttt{\upshape-}}\!\!\bra{\texttt{\upshape-+}}$ \;---
  then
  \vspace*{0ex}%
  \begin{equation}
  K_{1,R} \;\;=\;\;
  \begin{aligned}~\\[-7ex]
    \includegraphics[angle=-90]{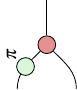}
  \end{aligned}
  \qquad
  \biggl(\text{respectively,}\quad   K_{1,R} \;\;=\;\;
  \begin{aligned}~\\[-4ex]
    \includegraphics[angle=-90]{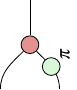}
  \end{aligned}
  \;\biggr).
  \end{equation}~\\[-1ex]
  Similarly, for a smooth merge, if we adapt the Pauli frame of the child memory to agree with the second parent memory so that $K_{1,S} = \ket{0}\!\!\bra{10} + \ket{1}\!\!\bra{01}$ \;--- \textup{respectively:} if we adapt the Pauli fram of the child memory to agree with the first parent memory so that $K_{1,S} = \ket{0}\!\!\bra{01} + \ket{1}\!\!\bra{10}$ \;---
  then
  \vspace*{0ex}%
  \begin{equation}
  K_{1,S} \;\;=\;\;
  \begin{aligned}~\\[-7ex]
    \includegraphics[angle=-90]{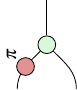}
  \end{aligned}
  \qquad
  \biggl(\text{respectively,}\quad   K_{1,S} \;\;=\;\;
  \begin{aligned}~\\[-3.5ex]
    \includegraphics[angle=-90]{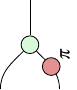}
  \end{aligned}
  \;\biggr).
  \end{equation}%
  \end{subequations}%
\end{lemma}%
\begin{proof}
  By calculation of the operators, from Eqn.~\eqref{ZXnodes}.  
\end{proof}

\begin{corollary}%
  \label{cor:ZXdiagramsMergeKrausOptors}%
  \begin{subequations}%
  \label{eqn:mergeDiagrams}%
  A rough merge operation realises a logical transformation $\mathsf M_R$ with Kraus operators
  \vspace*{-0ex}
  \begin{equation}
  \label{eqn:roughMergeDiagrams}
  \Biggl\{\;\;\;
    \begin{aligned}~\\[-5ex]
    \includegraphics[angle=-90]{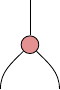}
    \end{aligned}
  \;\;\text{\Large\raisebox{-1.8ex},}\;\;\;
    \begin{aligned}~\\[-7ex]
    \includegraphics[angle=-90]{ZXfigs/mergeX-lcorr.pdf}
    \end{aligned}
  \;\;\Biggr\}
  \qquad
  \text{or}
  \qquad
  \Biggl\{\;\;\;
    \begin{aligned}~\\[-5ex]
    \includegraphics[angle=-90]{ZXfigs/mergeX-pos.pdf}
    \end{aligned}
  \;\;\text{\Large\raisebox{-1.8ex},}\;\;\;
    \begin{aligned}~\\[-3ex]
    \includegraphics[angle=-90]{ZXfigs/mergeX-rcorr.pdf}
    \end{aligned}
  \;\;\Biggr\}
  \end{equation}~\\[-1ex]
  acting on the logical qubit of the surface codes, depending on whether one adopts the convention of correcting either the first or the second parent in the negative branch.
  Similarly, a smooth merge operation realises a logical transformation $\mathsf M_S$ with Kraus operators
  \vspace*{-0ex}
  \begin{equation}
  \label{eqn:smoothMergeDiagrams}
  \Biggl\{\;\;\;
    \begin{aligned}~\\[-5ex]
    \includegraphics[angle=-90]{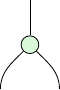}
    \end{aligned}
  \;\;\text{\Large\raisebox{-1.8ex},}\;\;\;
    \begin{aligned}~\\[-7ex]
    \includegraphics[angle=-90]{ZXfigs/mergeZ-lcorr.pdf}
    \end{aligned}
  \;\;\Biggr\}
  \qquad
  \text{or}
  \qquad
  \Biggl\{\;\;\;
    \begin{aligned}~\\[-5ex]
    \includegraphics[angle=-90]{ZXfigs/mergeZ-pos.pdf}
    \end{aligned}
  \;\;\text{\Large\raisebox{-1.8ex},}\;\;\;
    \begin{aligned}~\\[-3ex]
    \includegraphics[angle=-90]{ZXfigs/mergeZ-rcorr.pdf}
    \end{aligned}
  \;\Biggr\}
  \end{equation}~\\[-1ex]
  acting on the logical qubit of the surface codes, depending on whether on adopts the convention of correcting either the first or the second parent in the negative branch.
  \end{subequations}
\end{corollary}
\begin{proof}
  This follows from the previous Lemmata concerning the equality of the Kraus operators.  
\end{proof}

\paragraph{Remark.}\upshape
In practise, it is likely to prove convenient to analyse the operators $\mathsf M_S$ and $\mathsf M_R$ with \emph{annotated ZX diagrams}~\cite{DP-2010}, in which we allow nodes whose phases are not necessarily constants, and may include variables whose values may only be determined during the computation (\emph{e.g.}~as with the results of a measurement).
\begin{subequations}%
For instance, we may denote the pairs of logical Kraus operators for $\mathsf M_R$ by
  \begin{equation}
  \label{eqn:roughMergeDiagrams}
  \begin{aligned}~\\[-7ex]
    \includegraphics[angle=-90]{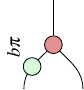}
  \end{aligned}
  \qquad\qquad
  \text{or}
  \qquad\qquad
  \begin{aligned}~\\[-3ex]
    \includegraphics[angle=-90]{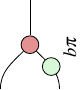}
  \end{aligned}
  \end{equation}~\\[-2ex]
  depending on the conventional choice of how one adapts the Pauli frame of the child memory; here $b \in \{0,1\}$ indicates the measurement outcome $M = (-1)^b$ of that merge, and the Kraus operator $K_{b,R}$ which is realised as a transformation of the quantum state.
  Similarly, the pairs of logical Kraus operators of $\mathsf M_S$ may be denoted by
  \begin{equation}
  \label{eqn:smoothMergeDiagrams}
  \begin{aligned}~\\[-7ex]
    \includegraphics[angle=-90]{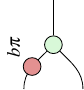}
  \end{aligned}
  \qquad\qquad
  \text{or}
  \qquad\qquad
  \begin{aligned}~\\[-3ex]
    \includegraphics[angle=-90]{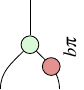}
  \end{aligned}
  \end{equation}~\\[-1.5ex]
  depending on the conventional choice of how one adapts the Pauli frame of the child memory.
  In either case, the bit $b \in \{0,1\}$ is revealed when the operation is performed, as a result of the stabiliser measurements which are performed during the merge operations; one may reason about the operations performed in general by leaving $b$ as a formal indeterminate, and describing the phase by a polynomial with real coefficients over some set of indeterminates (associated operationally with the outcomes of operations).
  \end{subequations}

\medskip
The purpose of associating the operations of lattice surgery with simple (annotated) ZX diagrams is two-fold: (a)~to allow us to reason about more diverse compositions of lattice surgery operations using compositions of the diagrams, and (b)~to allow us to relate more complex ZX-diagrams to lattice surgery procedures which {may be used to} realise them.
For (b), it suffices to produce a simple ZX diagram which is equivalent, and then consider how this diagram may be realised using (a). 
We have already glimpsed how (a) might be done in Corollary~\ref{cor:blimeyZXpositivelyIsLatticeSurgery} for the positive branch of any lattice surgery procedure; the following extends this to arbitrary outcomes of the merges:

\medskip
\begin{theorem}
  \label{thm:latticeSurgeryZXgeneral}
  Lattice surgery procedures model randomly-constructed simple ZX diagrams, in which gadgets realising $2$-to-$1$ nodes are selected from either one of the pairs of operations of Eqn.~\eqref{eqn:roughMergeDiagrams} for rough merges, and from either one of the pairs of operations of Eqn.~\eqref{eqn:smoothMergeDiagrams} for smooth merges; and the probability of each diagram is given by its 2-norm.
\end{theorem}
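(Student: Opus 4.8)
The plan is to assemble the statement from Corollary~\ref{cor:ZXdiagramsMergeKrausOptors} and the first Lemma, together with the standard fact that the Kraus operators of a composite quantum channel are the ordered products of the Kraus operators of its parts. First I would write a lattice surgery procedure as a temporally ordered composition of its basic maps: the deterministic unitary embeddings $\mathsf S_S,\mathsf S_R$, the fresh-qubit preparations of $\ket{g_\alpha}$ and $\ket{r_\alpha}$, and the stochastic two-outcome merges $\mathsf M_S,\mathsf M_R$. Ordering the merges $m_1,\dots,m_t$ in time and writing $b=(b_1,\dots,b_t)\in\{0,1\}^t$ for an assignment of an outcome to each, the channel $\Phi$ realised by the whole procedure has a Kraus decomposition $\Phi(\rho)=\sum_b E_b\,\rho\,E_b^\dagger$, in which $E_b$ is the ordered operator product (padded by identities on the idle subsystems at each time step) of the chosen merge Kraus operators $K_{b_i}$ with the split embeddings and preparation vectors. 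This is exactly the composition rule for Kraus representations, using that the splits and preparations each contribute a single Kraus operator while each merge contributes the pair $\{K_0,K_1\}$ fixed, for a chosen correction convention, by Corollary~\ref{cor:ZXdiagramsMergeKrausOptors}.

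Next I would identify each $E_b$ with a simple ZX diagram. By the first Lemma the split embeddings equal their arity-$3$ nodes, and by Corollary~\ref{cor:ZXdiagramsMergeKrausOptors} the operator selected at merge $m_i$ equals the gadget denoting $K_{b_i}$: the plain arity-$3$ node when $b_i=0$, or the node carrying a $\pi$-rotation when $b_i=1$, drawn from whichever of the two pairs of Eqn.~\eqref{eqn:smoothMergeDiagrams} or Eqn.~\eqref{eqn:roughMergeDiagrams} matches the convention used at that merge. Because the denotation of ZX diagrams is compositional --- wiring denotes operator composition and juxtaposition denotes tensor product --- substituting these gadgets into the diagrammatic skeleton of the procedure produces a simple diagram $D_b$ whose denotation is precisely $E_b$. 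The family $\{D_b\}_b$ is then exactly the set of ``randomly-constructed simple ZX diagrams'' obtained by choosing, at each $2$-to-$1$ node, one member of the relevant pair.

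Finally I would pin down the weights. For a normalised input $\ket\psi$ (and, by convexity, for mixed inputs), the Born rule gives the probability of the heralded outcome string $b$ as $p(b)=\lVert E_b\ket\psi\rVert^2=\bra\psi E_b^\dagger E_b\ket\psi$; since $E_b$ is the denotation of $D_b$, this is precisely the diagram's $2$-norm in the sense of the statement, namely the squared length of the sub-normalised output state. The scalar factors carried by the diagrams are what make this numerically correct --- the same phenomenon already noted for the CNOT of Eqn.~\eqref{cnot}, whose positive-branch diagram carries a factor $\tfrac{1}{\sqrt2}$ and hence squared $2$-norm $\tfrac12$, equal to the probability of that branch. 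Since each merge and split is trace-preserving and each preparation appends a fixed state, $\Phi$ is trace-preserving, so $\sum_b E_b^\dagger E_b=I$ and $\sum_b p(b)=1$; the ensemble $\{D_b\}$ weighted by $\{p(b)\}$ therefore reproduces the full output statistics of the procedure.

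The step needing the most care, and the main obstacle, is the bookkeeping of the first paragraph: verifying that composing the component channels really yields a Kraus family indexed by outcome strings whose members factor as the claimed ordered products, and that this factorisation survives the tensoring-with-identity forced by successive merges and splits acting on different subsystems at different times. It is also worth checking, as a one-line verification, that each pair in Corollary~\ref{cor:ZXdiagramsMergeKrausOptors} is individually a valid Kraus decomposition of a trace-preserving map (i.e.~$K_0^\dagger K_0 + K_1^\dagger K_1 = I$), so that any per-merge assignment of correction conventions gives a legitimate Kraus family for $\Phi$. With these in place the theorem follows by substituting the operator identities of the Lemmata and invoking the Born rule.
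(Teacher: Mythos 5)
Your proposal is correct and takes essentially the same route as the paper's own (much terser) proof: compose the per-component Kraus operators --- splits, $\ket{g_\alpha}$/$\ket{r_\alpha}$ preparations, and the convention-fixed merge pairs of Corollary~\ref{cor:ZXdiagramsMergeKrausOptors} --- into outcome-indexed Kraus operators for the whole procedure, identify each with a simple ZX diagram by compositionality, and read off the branch probability as the (squared) $2$-norm via the Born rule. Your extra bookkeeping --- the composition rule for Kraus families of a composite channel, the verification that $K_0^\dagger K_0 + K_1^\dagger K_1 = I$ for each pair, and the explicit dependence of $p(b)=\bra{\psi}E_b^\dagger E_b\ket{\psi}$ on the input state --- simply makes rigorous the steps the paper's proof sketch leaves implicit.
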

\begin{proof}
  Each pair of diagrams from Eqns.~\eqref{eqn:mergeDiagrams} describes Kraus operators of $\mathsf M_S$ and $\mathsf M_R$ (as determined by a conventional choice of procedure to update the Pauli frame).
  A single gadget selected from one of these pairs is modelled by a lattice merge of the appropriate sort, for which the diagram represents a Kraus operator.
  Composing such maps (and others) to produce a simple ZX diagram represents the effect of composing the Kraus operators for the lattice merge operations with those of {split operations, and preparation of qubits in the $\ket{g_\alpha}$ and $\ket{r_\alpha}$ states}. These form Kraus operators for the entire lattice surgery procedure.
  The 2-norm of the diagram then corresponds to the probability of the Kraus operator (i.e.~the particular merge outcomes) being realised.
\end{proof}

\section{Consequences of the equivalence}

We have described a tight connection between the ZX calculus and lattice surgery. Splitting and merging can be written in terms of degree-3 nodes (including correction operations if necessary); realised as Kraus operators of lattice surgery procedures. 
The use of ZX as a tool for manipulating lattice surgery diagrams has a huge number of potential applications. To begin with, almost all work on the ZX calculus can now be applied directly to lattice surgery. In particular, known equivalences under re-writing in the calculus now have an interpretation in terms of equivalence of lattice surgery patterns. For example, the Frobenius laws of Eqn.~\eqref{eq:frob} can now be imported and read directly as referring to equivalent set of split and merge operations. The standard re-write axioms of the calculus give further pattern equivalences. To give a flavour of the power of this re-writing system, we here detail two examples where ZX diagrammatic equivalence gives novel lattice surgery procedures: for non-Clifford rotations, and the \cnot (as a subroutine).

\subsection{Lattice surgery T gate}
\label{sec:magicState}

Magic states {permit} the realisation of {logical operations outside of the Clifford group} on surface codes, allowing for (approximately) universal quantum computing~\cite{bravyimagic}. In the surface code, they may be  distilled to high fidelity, injected into a code surface, and then used to perform teleported rotation gates \cite{fowler2009high}, as these cannot be performed transversally \cite{PhysRevLett.102.110502}. A $T$ gate uses a magic state and teleportation to perform a $\pi/4$ rotation. The usual presentation of the teleported $T$ gate involves a \cnot and measurement (see e.g.~\cite{campbell2017roads}). However, the ZX calculus allows us to give a more efficient procedure for lattice surgery.

Consider the state $\ket{A} \propto {\ket{0} + \mathrm e^{i\pi/4}\ket{1}}$. In a teleported $T$ gate this is used to perform a $Z$-rotation $R_z(\pi/4) = \ket{0}\!\!\bra{0} + \mathrm{e}^{i\pi/4} \ket{1}\!\!\bra{1}$. In the ZX calculus the $\ket{A}$ state is represented by a green $\pi/4$ {preparation node}, and the $R_z(\pi/4)$ operation by a green $\pi/4$ {degree-2} node.
As an application of Eqn.~\eqref{eqn:bialgebras}, we have
\vspace*{4ex}
\begin{equation}
\label{eqn:positiveBranchTmerge}
\begin{aligned}~\\[-10ex]
  \includegraphics[angle=-90]{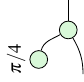}
\end{aligned}
\quad=\quad
\begin{aligned}~\\[-8ex]
  \includegraphics[angle=-90]{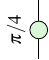}
\end{aligned}\;.
\end{equation}~\\[-2ex]
By Corollary~\ref{cor:blimeyZXpositivelyIsLatticeSurgery}, the left-hand diagram represents the positive branch of a smooth merge of an {$\ket{A}$} state with an input state. 
Using Lemma~\ref{lem:negativeBranchMaps} and Eqn.~\eqref{eqn:bialgebras}, the negative branch of this procedure is described by
\vspace*{2ex}
\begin{equation}
\label{eqn:negativeBranchTmerge-a}
\begin{aligned}~\\[-11ex]
  \includegraphics[angle=-90]{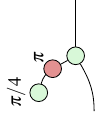}
\end{aligned}
\quad=\quad
\begin{aligned}~\\[-8ex]
  \includegraphics[angle=-90]{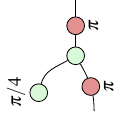}
\end{aligned}
\quad=\quad
\begin{aligned}~\\[-6ex]
  \includegraphics[angle=-90]{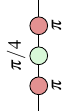}
\end{aligned}
\quad=\quad
\begin{aligned}~\\[-8ex]
  \includegraphics[angle=-90]{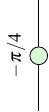}
\end{aligned}\;.
\end{equation}~\\[-3ex]
As with a standard $T$ gate~\cite[\S 6]{fowler2009high}, the rotation to correct this to $+\pi/4$ uses {$\ket{Y} \propto {\ket{0} + \mathrm e^{i\pi/2}\ket{1}}$}:
\vspace*{3.5ex}
\begin{equation}
\begin{aligned}~\\[-10ex]
  \includegraphics[angle=-90]{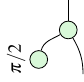}
\end{aligned}
\quad=\quad
\begin{aligned}~\\[-8ex]
  \includegraphics[angle=-90]{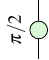}
\end{aligned}\;.
\end{equation}~\\[-2.5ex]
The negative branch of this operation is:
\vspace*{2ex}
\begin{equation}
\label{eqn:negativeBranchSmerge}
\begin{aligned}~\\[-11ex]
  \includegraphics[angle=-90]{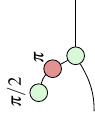}
\end{aligned}
\quad=\quad
\begin{aligned}~\\[-8ex]
  \includegraphics[angle=-90]{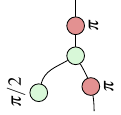}
\end{aligned}
\quad=\quad
\begin{aligned}~\\[-6ex]
  \includegraphics[angle=-90]{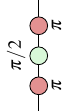}
\end{aligned}
\quad=\quad
\begin{aligned}~\\[-8ex]
  \includegraphics[angle=-90]{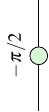}
\end{aligned}\;.
\end{equation}~\\[-2.5ex]
This can now be corrected, if necessary, by a phase flip. The $T$ gate has now become simple merging with magic states.\footnote{%
 The alternative convention for the negative branch of merges works equally well (see Section~\ref{apx:TgateMerge} of the Appendix).
}
{Note that the above ZX analysis also} holds if we interchange red and green {nodes: we thus also obtain a} procedure for $R_x(\pi/4)$ rotations as well. Procedures for smaller-angle rotations can similarly be found, which can then be used for the most efficient magic state compilations~\cite{campbell2016efficient}.

\vspace{-.25em}
\subsection{Realising \cnot gates}
\vspace{-.15em}

The original lattice surgery procedure for the \cnot gate is a smooth split of the control, followed by a rough merge of one of the daughter memories with the target \cite[\S4.1]{melattice}, followed by appropriate corrections depending on the outcome of the measurements in the merge procedure.
Compare the positive branch of this procedure (using Corollary~\ref{cor:blimeyZXpositivelyIsLatticeSurgery}) to the (subnormalised) representation of \cnot from Eqn.~\eqref{cnot}:
\vspace*{-0.5ex}
\begin{equation}
\label{eqn:standardSurgeryCNOT}
\begin{aligned}~\\[-5ex]
  \includegraphics[angle=-90]{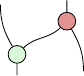}
\end{aligned}  
\;\;\;=\;\;\;
\begin{aligned}~\\[-5ex]
  \includegraphics[angle=-90]{ZXfigs/basicCNOT.pdf}
\end{aligned}  
\end{equation}~\\[-0.5ex]
These two presentations are clearly topologically equivalent. By considering other topologically equivalent presentations of the coloured graph of Eqn.~\eqref{cnot}, and how those graphs describe operations which may be decomposed as ZX diagrams, we may obtain  a further set of procedures, all of which implement the \cnot in the positive branch (in the Appendix we show how we may realise \cnot operations in the negative branch for each of these 
\noindent procedures, as well as for that given in {Eqn.~\eqref{eqn:standardSurgeryCNOT}}):
\vspace*{0ex}
\begin{equation}
\label{eqn:standardSurgeryCNOT:procedure2}
\begin{aligned}~\\[-5ex]
  \includegraphics[angle=-90]{ZXfigs/basicCNOT.pdf}
\end{aligned}  
\;\;\;=\;\;\;
\begin{aligned}~\\[-5ex]
  \includegraphics[angle=-90]{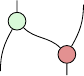}
\end{aligned}  
\;\;=\;\;\;
\begin{aligned}~\\[-5ex]
  \includegraphics[angle=-90]{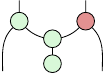}
\end{aligned}  
\;\;\;=\;\;\;
\begin{aligned}~\\[-5ex]
  \includegraphics[angle=-90]{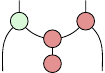}
\end{aligned}  
\;\;\;=\;\;\;
\begin{aligned}~\\[-5ex]
  \includegraphics[angle=-90]{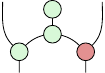}
\end{aligned}  
\;\;\;=\;\;\;
\begin{aligned}~\\[-5ex]
  \includegraphics[angle=-90]{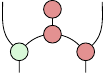}
\end{aligned}  
\end{equation}~\\[-.51ex]
From left to right: the second of these diagrams describes a rough split of the target followed by a smooth merge with the control; the third and fourth show the creation of an intermediary Bell pair $\ket{\Phi^+} \propto {\ket{00} + \ket{11}} = {\ket{\texttt{++}} + \ket{\texttt{--}}}$, the two halves of which are merged with the control and target respectively; and the final two
describe splits to both control and target qubits (smooth and rough respectively), with a simulated Bell projection $\bra{\Phi^+} \propto \bra{00} + \bra{11} = \bra{\texttt{++}} + \bra{\texttt{--}}$ on a pair of the daughter memories.

Thus, by minor variations of ZX diagrams, we obtain a proliferation of different procedures to realise a \cnot by lattice surgery procedures --- of which only the one illustrated in Eqn.~\eqref{eqn:standardSurgeryCNOT} has previously appeared in the literature.
This gives an indication of the breadth of compilation flexibility that the use of the ZX calculus will bring to the surface code with lattice surgery.

\section{Conclusions}

We have demonstrated how the ZX calculus acts as a precise and fundamental description of the splitting and merging operations of lattice surgery on the planar surface code. Merge operations output a bit of information, and this is used to determine which out of a set of possible diagrams describe the post-merge state. Sometimes, as with the case of the constructed \cnot gate, these possible diagrams differ by logical Pauli corrections. However, this is not in general the case. Previously there was no easy, systematic way to describe these Pauli frame updates in the high-level language used (\emph{i.e.},~the circuit model). With the use of the ZX calculus, there is now a high-level notation, together with rules to calculate with that notation.
This allows one to straightforwardly describe the Pauli frame information associated with a lattice surgery procedure, as well as how that information propagates under further operations.

Lattice surgery fits precisely the structure of the ZX calculus in the ``positive branch'', \emph{i.e.}, when merge measurement outcomes are $+1$. Merge and split operations --- whether rough (corresponding to red nodes) or smooth (green) --- are related through the dagger structure of the calculus. A rough merge is a retraction (inversion by post-composition) of a rough split, and equivalently for smooth merges and splits. Legs of nodes, which represent different parent or child surfaces, may be interchanged; this is the commutative structure. The specialness axioms also follow, as does associativity. Because of this, multi-input/output ``spiders'' may be defined. In lattice surgery terms, this is a simultaneous splitting or merging into/from multiple surfaces. Multi-surface splitting was described in the original paper on lattice surgery \cite{melattice}; now, however, we can prove what effect such operations have on the encoded data, and how they interact with each other.
In particular, the ability to describe the byproducts in the ``negative branches'', and when they can be corrected, will make it possible to determine which multi-node operations can be deterministically performed.

We have focussed in this paper on lattice surgery within the \textit{surface} code.
The question of a similar representation for other codes is an interesting question.
In colour code lattice surgery, for example, $Y$-basis merging appears \cite{ccode} (see also \cite{litinski2019game}): this may motivate the development of a calculus closely related to the ZX calculus in which such merges have a similarly direct representation.
In other ways, ZX already has proven its worth as a language for the broad class of codes known as coherent parity check codes (which includes the CSS construction) \cite{chancellor16}; and an early use of automated theorem proving in the calculus was verify correctness of the Steane code \cite{duncan2013verifying}.
Ways in which ZX may be used as a notation to describe the use of stabilizer codes is an important area of ongoing further research.

{There are a number of uses for the work presented here; the following is by no means exhaustive. Firstly, an immediate application is the use of ZX as the basis of an \emph{intermediate representation} in compilers} for near-term error corrected quantum devices. Such devices, based for example on architectures such as \cite{nickerson2014freely,nickerson2013topological}, have strictly limited resources. Compiling protocols to \cnot operations {wastes valuable operations; using ZX we can now compile to precisely the operations that such a device will in practice implement.}
The use of ZX, and automated tools \cite{pyzx}, for compilation is an ongoing larger research project, including optimisations for Clifford circuits \cite{FaganDuncan}, circuit simplification \cite{duncan2019graph} including routing and topological constraints \cite{kissinger2019cnot,cowtan2019qubit}, and state-of-the-art T-count reductions at the time of writing \cite{kissinger2019tcount}.
{Secondly, the ZX calculus can be used as a quantum protocol design tool. Through lattice surgery, the ZX calculus expresses another circuit-like model of quantum computing which we may attempt to realise through practical operations.} {The ZX calculus allows us to explore new techniques for quantum protocols, which may not be as naturally expressed by unitary circuits (such as in measurement-based quantum computing \cite{duncan2009graph}), for computation and also networking/communication~\cite{jones2016design,van2014} and large-scale qubit registers \cite{scalable}} There is a large amount of work on the calculus that can now be imported for use with lattice surgery; the new procedures given in this paper will be the first of many examples.
With the equivalence between ZX and lattice surgery fixed, it will now be possible to take a diagram in the ZX calculus, re-write it, and then interpret that diagram as a lattice surgery procedure --- which will include a definition of determinism in diagrams. The equivalence shown in this paper allows us to analyse lattice surgery procedures, to determine how a ZX transformation may practically be realised. This will include the question of time-ordering (including simultaneity) of compiled operations.

There are further uses of the ZX/lattice surgery equivalence. By producing ZX-compiled lattice surgery procedures for known algorithms and protocols, not only can we potentially simplify procedures significantly, but also a different logic can be applied to analyse how these protocols behave. ZX from its beginnings was introduced as another logic to analyse and design further quantum algorithms and protocols. By expanding the space of design tools, we expand what can be thought of in quantum computing. With a physical model of the ZX calculus in the operations of lattice surgery, this programme 
has an important role to play in realising near-term error corrected quantum devices.

\section*{Acknowledgements}
Many thanks to Ross Duncan and Aleks Kissinger for many useful and interesting discussions on the topic of this paper, and to anonymous reviewers at QPL for comments on an earlier version.

NB is supported by the EPSRC National Hub in Networked Quantum Information Technologies (NQIT.org).
{DH was supported by EPSRC under Grant EP/L022303/1.}\\

\bibliography{bib-ZX-lattice}

\begin{thebibliography}{47}
\providecommand{\natexlab}[1]{#1}
\providecommand{\url}[1]{\texttt{#1}}
\expandafter\ifx\csname urlstyle\endcsname\relax
  \providecommand{\doi}[1]{doi: #1}\else
  \providecommand{\doi}{doi: \begingroup \urlstyle{rm}\Url}\fi

\bibitem[Backens(2014)]{backens2014zx}
M.~Backens.
\newblock The {ZX}-calculus is complete for stabilizer quantum mechanics.
\newblock \emph{New Journal of Physics}, 16\penalty0 (9):\penalty0 093021,
  2014.
\newblock \doi{10.1088/1367-2630/16/9/093021}.

\bibitem[Backens(2015)]{Backens-2015}
Miriam Backens.
\newblock Making the stabilizer {ZX}-calculus complete for scalars.
\newblock In Chris Heunen, Peter Selinger, and Jamie Vicary, editors,
  \emph{Proceedings of the 12th International Workshop on Quantum Physics and
  Logic (QPL 2015)}, volume 195 of \emph{Electronic Proceedings in Theoretical
  Computer Science}, pages 17--32, 2015.
\newblock \doi{10.4204/EPTCS.195.2}.

\bibitem[Bombin(2010)]{bombin2010topological}
H.~Bombin.
\newblock Topological order with a twist: Ising anyons from an {Abelian} model.
\newblock \emph{Physical review letters}, 105\penalty0 (3):\penalty0 030403,
  2010.
\newblock \doi{10.1103/PhysRevLett.105.030403}.

\bibitem[Bombin and Martin-Delgado(2009)]{bombin2009}
H.~Bombin and M.~Martin-Delgado.
\newblock Quantum measurements and gates by code deformation.
\newblock \emph{Journal of Physics A: Mathematical and Theoretical},
  42\penalty0 (9):\penalty0 095302, 2009.
\newblock \doi{10.1088/1751-8113/42/9/095302}.

\bibitem[Bravyi and Kitaev(1998)]{bravyikitaev}
S.~Bravyi and A.~Kitaev.
\newblock Quantum codes on a lattice with boundary.
\newblock \emph{Preprint, arXiv:quant-ph/9811052}, 1998.
\newblock Translation of Quantum Computers and Computing 2 (1), pp. 43-48.
  (2001).

\bibitem[Bravyi and Kitaev(2005)]{bravyimagic}
S.~Bravyi and A.~Kitaev.
\newblock Universal quantum computation with ideal {Clifford} gates and noisy
  ancillas.
\newblock \emph{Physical Review A}, 71\penalty0 (2):\penalty0 022316, 2005.
\newblock \doi{10.1103/PhysRevA.71.022316}.

\bibitem[Campbell and O'Gorman(2016)]{campbell2016efficient}
E.~Campbell and J.~O'Gorman.
\newblock An efficient magic state approach to small angle rotations.
\newblock \emph{Quantum Science and Technology}, 1\penalty0 (1):\penalty0
  015007, 2016.
\newblock \doi{10.1088/2058-9565/1/1/015007}.

\bibitem[Campbell et~al.(2017)Campbell, Terhal, and Vuillot]{campbell2017roads}
Earl~T. Campbell, Barbara~M. Terhal, and Christophe Vuillot.
\newblock Roads towards fault-tolerant universal quantum computation.
\newblock \emph{Nature}, 549\penalty0 (7671):\penalty0 172, 2017.
\newblock \doi{10.1038/nature23460}.
\newblock arXiv:1612.07330.

\bibitem[Carette et~al.(2019)Carette, Horsman, and Perdrix]{scalable}
T.~Carette, D.~Horsman, and S.~Perdrix.
\newblock {SZX}-calculus: Scalable graphical quantum reasoning.
\newblock \emph{Preprint, arXiv:1905.00041}, 2019.

\bibitem[Chancellor et~al.(2018)Chancellor, Kissinger, Roffe, Zohren, and
  Horsman]{chancellor16}
Nicholas Chancellor, Aleks Kissinger, Joschka Roffe, Stefan Zohren, and Dominic
  Horsman.
\newblock Graphical structures for design and verification of quantum error
  correction.
\newblock \emph{arXiv:1611.08012}, 2018.

\bibitem[Coecke and Duncan(2011)]{monster}
B.~Coecke and R.~Duncan.
\newblock Interacting quantum observables: categorical algebra and
  diagrammatics.
\newblock \emph{New Journal of Physics}, 13\penalty0 (4):\penalty0 043016,
  2011.
\newblock \doi{10.1088/1367-2630/13/4/043016}.

\bibitem[Coecke and Kissinger(2017)]{zxbook}
B.~Coecke and A.~Kissinger.
\newblock \emph{{Picturing} {Quantum} {Processes}: A first course in quantum
  theory and diagrammatic reasoning}.
\newblock Cambridge University Press, 2017.
\newblock \doi{10.1017/9781316219317}.

\bibitem[Coecke et~al.(2008)Coecke, Paquette, and Pavlovic]{Coecke2008}
B~Coecke, E.~Paquette, and D.~Pavlovic.
\newblock Classical and quantum structuralism.
\newblock \emph{Semantic Techniques in Quantum Computation, eds. Gay S., Mackie
  I., Cambridge University Press}, 2008.
\newblock \doi{10.1017/CBO9781139193313.003}.
\newblock arXiv:0904.1997.

\bibitem[Cowtan et~al.(2019)Cowtan, Dilkes, Duncan, Krajenbrink, Simmons, and
  Sivarajah]{cowtan2019qubit}
Alexander Cowtan, Silas Dilkes, Ross Duncan, Alexandre Krajenbrink, Will
  Simmons, and Seyon Sivarajah.
\newblock {On the Qubit Routing Problem}.
\newblock In Wim van Dam and Laura Mancinska, editors, \emph{14th Conference on
  the Theory of Quantum Computation, Communication and Cryptography (TQC
  2019)}, volume 135 of \emph{Leibniz International Proceedings in Informatics
  (LIPIcs)}, pages 5:1--5:32, Dagstuhl, Germany, 2019. Schloss
  Dagstuhl--Leibniz-Zentrum fuer Informatik.
\newblock ISBN 978-3-95977-112-2.
\newblock \doi{10.4230/LIPIcs.TQC.2019.5}.
\newblock URL \url{http://drops.dagstuhl.de/opus/volltexte/2019/10397}.
\newblock arXiv:1902.08091.

\bibitem[Dennis et~al.(2002)Dennis, Kitaev, Landahl, and
  Preskill]{dennis2002topological}
E.~Dennis, A.~Kitaev, A.~Landahl, and J.~Preskill.
\newblock Topological quantum memory.
\newblock \emph{Journal of Mathematical Physics}, 43\penalty0 (9):\penalty0
  4452--4505, 2002.
\newblock \doi{10.1063/1.1499754}.
\newblock arXiv:quant-ph/0110143.

\bibitem[Duncan and Perdrix(2009)]{duncan2009graph}
R.~Duncan and S.~Perdrix.
\newblock Graph states and the necessity of {Euler} decomposition.
\newblock In \emph{Conference on Computability in Europe}, pages 167--177.
  Springer, 2009.
\newblock \doi{10.1007/978-3-642-03073-4}.

\bibitem[Duncan and Lucas(2013)]{duncan2013verifying}
Ross Duncan and Maxime Lucas.
\newblock Verifying the steane code with {Q}uantomatic.
\newblock \emph{Proceedings QPL 2013}, pages 33--49, 2013.
\newblock \doi{10.4204/EPTCS.171.4}.
\newblock arXiv:1306.4532.

\bibitem[Duncan and Perdrix(2010)]{DP-2010}
Ross Duncan and Simon Perdrix.
\newblock Rewriting measurement-based quantum computations with generalised
  flow.
\newblock In Samson Abramsky, Cyril Gavoille, Claude Kirchner, Friedhelm Meyer
  auf~der Heide, and Paul~G. Spirakis, editors, \emph{Automata, Languages and
  Programming}, pages 285--296, Berlin, Heidelberg, 2010. Springer Berlin
  Heidelberg.
\newblock ISBN 978-3-642-14162-1.

\bibitem[Duncan et~al.(2019)Duncan, Kissinger, Perdrix, and van~de
  Wetering]{duncan2019graph}
Ross Duncan, Aleks Kissinger, Simon Perdrix, and John van~de Wetering.
\newblock Graph-theoretic simplification of quantum circuits with the
  {ZX}-calculus.
\newblock \emph{Preprint, arXiv:1902.03178}, 2019.

\bibitem[Eastin and Knill(2009)]{PhysRevLett.102.110502}
B.~Eastin and E.~Knill.
\newblock Restrictions on transversal encoded quantum gate sets.
\newblock \emph{Phys. Rev. Lett.}, 102:\penalty0 110502, Mar 2009.
\newblock \doi{10.1103/PhysRevLett.102.110502}.
\newblock arXiv:0811.4262.

\bibitem[Fagan and Duncan(2019)]{FaganDuncan}
Andrew Fagan and Ross Duncan.
\newblock Optimising {Clifford} circuits with {Quantomatic}.
\newblock In \emph{Proceedings of the 15th International Conference on
  {Quantum} {Physics} and {Logic} {(QPL)}}, volume 287 of \emph{Electronic
  Proceedings in Theoretical Computer Science}, pages 85--105. Open Publishing
  Association, 2019.
\newblock \doi{10.4204/EPTCS.287.5}.

\bibitem[Fowler et~al.(2009)Fowler, Stephens, and Groszkowski]{fowler2009high}
A.~Fowler, A.~Stephens, and P.~Groszkowski.
\newblock High-threshold universal quantum computation on the surface code.
\newblock \emph{Phys. Rev. A}, 80:\penalty0 052312, 2009.
\newblock \doi{10.1103/PhysRevA.80.052312}.

\bibitem[Fowler and Gidney(2018)]{fowler2018low}
Austin~G Fowler and Craig Gidney.
\newblock Low overhead quantum computation using lattice surgery.
\newblock \emph{Preprint, arXiv:1808.06709}, 2018.

\bibitem[Freedman and Meyer(2001)]{freedman2001projective}
M.~Freedman and D.~Meyer.
\newblock Projective plane and planar quantum codes.
\newblock \emph{Foundations of Computational Mathematics}, 1\penalty0
  (3):\penalty0 325--332, 2001.
\newblock \doi{10.1007/s102080010013}.

\bibitem[Gidney and Fowler(2019)]{gidney2018efficient}
Craig Gidney and Austin~G Fowler.
\newblock Efficient magic state factories with a catalyzed {$\mathrm{\lvert
  CCZ\rangle}$} to {$\mathrm{2\lvert T\rangle}$} transformation.
\newblock \emph{Quantum}, 3, 2019.
\newblock \doi{10.22331/q-2019-04-30-135}.
\newblock arXiv:1812.01238.

\bibitem[Google()]{google}
Google.
\newblock
  \url{https://ai.googleblog.com/2018/03/a-preview-of-bristlecone-googles-new.html}.
\newblock Accessed 10/04/2019.

\bibitem[Gottesman(1996)]{gottesman1996class}
Daniel Gottesman.
\newblock Class of quantum error-correcting codes saturating the quantum
  {H}amming bound.
\newblock \emph{Physical Review A}, 54\penalty0 (3):\penalty0 1862, 1996.
\newblock \doi{10.1103/PhysRevA.54.1862}.
\newblock arXiv:quant-ph/9604038.

\bibitem[Hadzihasanovic et~al.(2018)Hadzihasanovic, Ng, and Wang]{HNW}
Amar Hadzihasanovic, Kang~Feng Ng, and Quanlong Wang.
\newblock Two complete axiomatisations of pure-state qubit quantum computing.
\newblock In \emph{Proceedings of the 33rd Annual ACM/IEEE Symposium on Logic
  in Computer Science}, LICS '18, pages 502--511, New York, NY, USA, 2018. ACM.
\newblock ISBN 978-1-4503-5583-4.
\newblock \doi{10.1145/3209108.3209128}.

\bibitem[Herr et~al.(2017)Herr, Nori, and Devitt]{herrnoridevitt}
D.~Herr, F.~Nori, and S.~Devitt.
\newblock Lattice surgery translation for quantum computation.
\newblock \emph{New Journal of Physics}, \penalty0 (19):\penalty0 013034, 2017.
\newblock \doi{10.1088/1367-2630/aa5709}.

\bibitem[Horsman et~al.(2012)Horsman, Fowler, Devitt, and Van~Meter]{melattice}
C.~Horsman, A.~G Fowler, S.~Devitt, and R.~Van~Meter.
\newblock Surface code quantum computing by lattice surgery.
\newblock \emph{New Journal of Physics}, 14\penalty0 (12):\penalty0 123011,
  2012.
\newblock \doi{10.1088/1367-2630/14/12/123011}.
\newblock arXiv:1111.4022.

\bibitem[IBM()]{IBM}
IBM.
\newblock \url{https://www.research.ibm.com/ibm-q/}.
\newblock Accessed 10/04/2019.

\bibitem[Jones et~al.(2016)Jones, Kim, Rakher, Kwiat, and
  Ladd]{jones2016design}
C.~Jones, D.~Kim, M.~Rakher, P.~Kwiat, and T.~Ladd.
\newblock Design and analysis of communication protocols for quantum repeater
  networks.
\newblock \emph{New Journal of Physics}, 18\penalty0 (8):\penalty0 083015,
  2016.
\newblock \doi{10.1088/1367-2630/18/8/083015}.

\bibitem[Kissinger and Zamdzhiev(2015)]{kissinger2015quantomatic}
A.~Kissinger and V.~Zamdzhiev.
\newblock {Quantomatic}: A proof assistant for diagrammatic reasoning.
\newblock In \emph{International Conference on Automated Deduction}, pages
  326--336. Springer, 2015.
\newblock \doi{10.1007/978-3-319-21401-6_22}.

\bibitem[Kissinger and de~Griend(2019)]{kissinger2019cnot}
Aleks Kissinger and Arianne Meijer-van de~Griend.
\newblock {CNOT} circuit extraction for topologically-constrained quantum
  memories.
\newblock \emph{Preprint, arXiv:1904.00633}, 2019.

\bibitem[Kissinger and van~de
  Wetering(2019{\natexlab{a}})]{kissinger2019tcount}
Aleks Kissinger and John van~de Wetering.
\newblock Reducing {T}-count with the {ZX}-calculus.
\newblock \emph{Preprint, arXiv:1903.10477}, 2019{\natexlab{a}}.

\bibitem[Kissinger and van~de Wetering(2019{\natexlab{b}})]{pyzx}
Aleks Kissinger and John van~de Wetering.
\newblock {PyZX}: {L}arge scale automated diagrammatic reasoning.
\newblock \emph{Preprint, arXiv:1904.04735}, 2019{\natexlab{b}}.

\bibitem[Knill(2005)]{knill2005pauli}
E.~Knill.
\newblock Quantum computing with realistically noisy devices.
\newblock \emph{Nature}, 434\penalty0 (7029):\penalty0 39--44, 2005.
\newblock \doi{10.1038/nature03350}.
\newblock arXiv:quant-ph/0410199.

\bibitem[Landahl and Ryan-Anderson(2014)]{ccode}
Andrew~J Landahl and Ci{\'a}ran Ryan-Anderson.
\newblock Quantum computing by color-code lattice surgery.
\newblock \emph{Preprint, arXiv:1407.5103}, 2014.
\newblock SAND2014-15911J.

\bibitem[Litinski(2019)]{litinski2019game}
Daniel Litinski.
\newblock A game of surface codes: Large-scale quantum computing with lattice
  surgery.
\newblock \emph{Quantum}, 3:\penalty0 128, 2019.
\newblock \doi{10.22331/q-2019-03-05-128}.

\bibitem[Nickerson et~al.(2013)Nickerson, Li, and
  Benjamin]{nickerson2013topological}
N.~Nickerson, Y.~Li, and S.~Benjamin.
\newblock Topological quantum computing with a very noisy network and local
  error rates approaching one percent.
\newblock \emph{Nature communications}, 4:\penalty0 1756, 2013.
\newblock \doi{10.1038/ncomms2773}.

\bibitem[Nickerson et~al.(2014)Nickerson, Fitzsimons, and
  Benjamin]{nickerson2014freely}
N.~Nickerson, J.~Fitzsimons, and S.~Benjamin.
\newblock Freely scalable quantum technologies using cells of 5-to-50 qubits
  with very lossy and noisy photonic links.
\newblock \emph{Physical Review X}, 4\penalty0 (4):\penalty0 041041, 2014.
\newblock \doi{10.1103/PhysRevX.4.041041}.

\bibitem[Nielsen and Chuang(2000)]{nandc}
Michael~A Nielsen and Isaac Chuang.
\newblock \emph{Quantum computation and quantum information}.
\newblock Cambridge University Press, Cambridge UK, 2000.
\newblock \doi{10.1017/CBO9780511976667}.

\bibitem[Raussendorf and Harrington(2007)]{raussendorf2007}
R.~Raussendorf and J.~Harrington.
\newblock Fault-tolerant quantum computation with high threshold in two
  dimensions.
\newblock \emph{Physical review letters}, 98\penalty0 (19):\penalty0 190504,
  2007.
\newblock \doi{10.1103/PhysRevLett.98.190504}.

\bibitem[Terhal(2015)]{terhal2015}
B.~Terhal.
\newblock Quantum error correction for quantum memories.
\newblock \emph{Reviews of Modern Physics}, 87\penalty0 (2):\penalty0 307,
  2015.
\newblock \doi{10.1103/RevModPhys.87.307}.

\bibitem[Van~Meter(2014)]{van2014}
R.~Van~Meter.
\newblock \emph{Quantum {N}etworking}.
\newblock John Wiley \& Sons, 2014.
\newblock \doi{10.1002/9781118648919}.

\bibitem[Van~Meter and Horsman(2013)]{van2013blueprint}
R.~Van~Meter and C.~Horsman.
\newblock A blueprint for building a quantum computer.
\newblock \emph{Communications of the ACM}, 56\penalty0 (10):\penalty0 84--93,
  2013.
\newblock \doi{10.1145/2494568}.

\bibitem[ZX-calculus()]{zxwebsite}
The ZX-calculus.
\newblock \url{http://zxcalculus.com/}.
\newblock Accessed 08/07/2019.

\end{thebibliography}

\vspace{2cm}
\appendix
\renewcommand\thesubsection{\Alph{subsection}.}
\renewcommand\thesubsubsection{\thesubsection\arabic{subsubsection}.}

\vspace{-.5em}
\section*{Appendix}
\vspace{-.25em}

In this Appendix we show how the correction operations for the \cnot procedures of Eqns.~\eqref{eqn:standardSurgeryCNOT} and~\eqref{eqn:standardSurgeryCNOT:procedure2} equate to logical Pauli corrections on the control and/or target qubit, and also show how the choice of convention for adapting merge operations in $T$-state injection does not introduce any essential difficulties.

\subsection{{CNOT patterns}}

Eqns.~\eqref{eqn:standardSurgeryCNOT} and~\eqref{eqn:standardSurgeryCNOT:procedure2} describe transformations of logical qubits which are realised in the ``positive branch'' of lattice surgery procedures involving various merges, splits, preparations, and projections, which may be inferred from the topology of the diagram involved.
Below, we describe other possible evolutions which those procedures may realise, for different conventional choices of the Kraus operations $K_{1,S}$ or $K_{1,R}$ for the ``negative'' branch of the merge operations involved.

In each case, the transformations realised differ from the transformations realised in the positive branch only by logical Pauli operations (represented by green and/or red $\pi$-nodes).
These logical Pauli operators are heralded by the outcome of the merge operations, and can therefore be tracked or corrected.
This demonstrates that these protocols are not sensitive to the convention used for adapting the Pauli frame of the child memory in any given merge operation.

\vspace*{-2ex}
\subsubsection{Smooth split control, rough merge with target. }

\vspace*{-1ex}
\begin{subequations}%
\allowdisplaybreaks
\begin{align}
  \begin{minipage}{40mm}\raggedright
    Positive branch:    
  \end{minipage}
&&
  \begin{gathered}~\\[-5ex]
    \includegraphics[scale=0.8,angle=-90]{ZXfigs/scurveCNOT.pdf}
  \end{gathered}\qquad&
\mspace{80mu}\\[4ex]
  \begin{minipage}{40mm}\raggedright
      Negative branch, using \\
      $K_{1,R} \,=\, \ket{\texttt{+}}\!\!\bra{\texttt{-+}} \;+\; \ket{\texttt{-}}\!\!\bra{\texttt{+-}}$\,:
  \end{minipage}
&&
  \begin{gathered}~\\[-5ex]
    \includegraphics[scale=0.8,angle=-90]{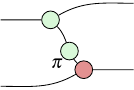}
  \end{gathered}
  \quad&=\quad
  \begin{gathered}~\\[-5ex]
    \includegraphics[scale=0.8,angle=-90]{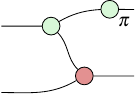}  
  \end{gathered}
\mspace{80mu}\\[4ex]
  \begin{minipage}{40mm}\raggedright
      Negative branch, using \\
      $K_{1,R} \,=\, \ket{\texttt{+}}\!\!\bra{\texttt{+-}} \;+\; \ket{\texttt{-}}\!\!\bra{\texttt{-+}}$\,:
  \end{minipage}
&&\mspace{-80mu}
  \begin{gathered}~\\[-5ex]
    \includegraphics[scale=0.8,angle=-90]{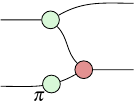}
  \end{gathered}
  \quad&=\quad
  \begin{gathered}~\\[-5ex]
    \includegraphics[scale=0.8,angle=-90]{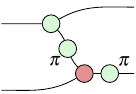}
  \end{gathered}
  \quad=\quad
  \begin{gathered}~\\[-5ex]
    \includegraphics[scale=0.8,angle=-90]{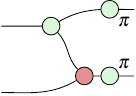}
  \end{gathered}
\end{align}  
\end{subequations}

\vspace*{-2ex}
\subsubsection{Rough split target, smooth merge with control.}

\vspace*{-1ex}
\begin{subequations}%
\allowdisplaybreaks
\begin{align}
  \begin{minipage}{40mm}\raggedright
    Positive branch:    
  \end{minipage}
&&
  \begin{gathered}~\\[-5ex]
    \includegraphics[scale=0.8,angle=-90]{ZXfigs/zcurveCNOT.pdf}
  \end{gathered}\qquad&
\\[4ex]
  \begin{minipage}{40mm}\raggedright
      Negative branch, using \\
      $K_{1,S} \,=\, \ket{0}\!\!\bra{10} \;+\; \ket{1}\!\!\bra{01}$\,:
  \end{minipage}
&&
  \begin{gathered}~\\[-5ex]
    \includegraphics[scale=0.8,angle=-90]{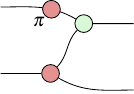}
  \end{gathered}
  \quad&=\quad
  \begin{gathered}~\\[-5ex]
    \includegraphics[scale=0.8,angle=-90]{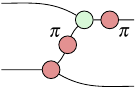}
  \end{gathered}
  \quad=\quad
  \begin{gathered}~\\[-5ex]
    \includegraphics[scale=0.8,angle=-90]{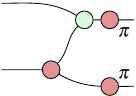}
  \end{gathered}
\\[4ex]
  \begin{minipage}{40mm}\raggedright
      Negative branch, using \\
      $K_{1,S} \,=\, \ket{0}\!\!\bra{01} \;+\; \ket{1}\!\!\bra{10}$\,:
  \end{minipage}
&&\mspace{-80mu}
  \begin{gathered}~\\[-5ex]
    \includegraphics[scale=0.8,angle=-90]{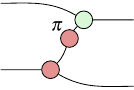}
  \end{gathered}
  \quad&=\quad
  \begin{gathered}~\\[-5ex]
    \includegraphics[scale=0.8,angle=-90]{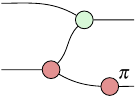}  
  \end{gathered}
\end{align}
\end{subequations}

\vspace*{-2ex}
\subsubsection{Create intermediary Bell pair then merge with control and target.}

By preparing a single-qubit state in the $\ket{0}$ state and then performing a smooth split, or preparing a single-qubit $\ket{\texttt+}$  state and then performing a rough split, we may create a Bell pair, expressed alternatively as $\tfrac{1}{\sqrt 2} \bigl( \ket{00} + \ket{11} \bigr)$ or $\tfrac{1}{\sqrt 2} \bigl(\ket{\texttt{++}} + \ket{\texttt{--}}\bigr)$.
We may represent this state by a curved wire producing two outputs, or ``cup'' --- representing the fact that the coefficients of this tensor are the same as that of the identity matrix --- if we are willing to account for the fact that this representation is super-normalised by a factor of $\sqrt 2$ (as with the representations of $\ket{0}$ and $\ket{\texttt+}$ described in Eqn.~\eqref{eqn:lollipops}).
We may then use this to realise a CNOT operation (with some probability) by performing a smooth and a rough merge.
These merge operations may independently yield the transformation for the positive branch (realising a Kraus operator $K_{0,\ast}$) or the negative branch (realising a Kraus operator $K_{1,\ast}$).

In the following, we illustrate only the analysis in which \emph{both} merges yield the positive branch, or the negative branch. (The cases where one merge yields the positive branch and one the negative branch are easier, but otherwise similar.)
\begin{subequations}%
\allowdisplaybreaks%
Unlike for merge operations, no logical byproduct operations are produced by lattice surgery split operations: and so it does not matter from that point of view whether the ``cup'' is created using a red split or a green split:
\vspace*{0ex}
\begin{align}{}
\mspace{-24mu}
  \begin{minipage}{40mm}\raggedright
    Positive branch:    
  \end{minipage}
&&\mspace{0mu}
  \begin{gathered}~\\[-5ex]
  \includegraphics[scale=0.8,angle=-90]{ZXfigs/prep-zbell-CNOT.pdf}
\end{gathered}  
\quad\!&=\quad
  \begin{gathered}~\\[-5ex]
  \includegraphics[scale=0.8,angle=-90]{ZXfigs/prep-xbell-CNOT.pdf}
\end{gathered}  
\quad\!=\;\;\;
  \begin{gathered}~\\[-5ex]
  \includegraphics[scale=0.8,angle=-90]{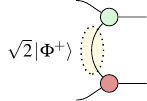}
\end{gathered}
\\[-4ex]
\notag
\intertext{
We therefore treat both these cases together, in terms of ``cups'' as in the right-hand-side above rather than explicit preparations and splits.
In the following, we sometimes commute $X$ and $Z$ operators (represented by `dark' red / `light' green $\pi$-phase nodes) past one another: this induces an unimportant global phase of $-1$ when we do so (which we denote by $\cong$ rather than $=$ for clarity).
}  
\mspace{-24mu}
  \begin{minipage}{40mm}\raggedright
      Negative branch, using \\
      $K_{1,S} \,=\, \ket{0}\!\!\bra{10} \;+\; \ket{1}\!\!\bra{01}$\,,
      $K_{1,R} \,=\, \ket{\texttt{+}}\!\!\bra{\texttt{-+}} \;+\; \ket{\texttt{-}}\!\!\bra{\texttt{+-}}$\,:
  \end{minipage}
&&\mspace{0mu}
  \begin{gathered}~\\[-5ex]
    \includegraphics[scale=0.8,angle=-90]{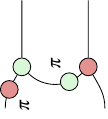}
  \end{gathered}
  \;\;&=\;\;
  \begin{gathered}~\\[-5ex]
    \includegraphics[scale=0.8,angle=-90]{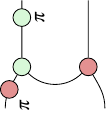}
  \end{gathered}
  \;\;=\;\;
  \begin{gathered}~\\[-5ex]
    \includegraphics[scale=0.8,angle=-90]{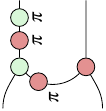}
  \end{gathered}
  \;\;=\;\;
  \begin{gathered}~\\[-5ex]
    \includegraphics[scale=0.8,angle=-90]{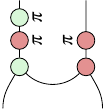}
  \end{gathered}
  \mspace{-20mu}
\\[4ex]
\mspace{-24mu}
  \begin{minipage}{40mm}\raggedright
      Negative branch, using \\
      $K_{1,S} \,=\, \ket{0}\!\!\bra{10} \;+\; \ket{1}\!\!\bra{01}$\,,
      $K_{1,R} \,=\, \ket{\texttt{+}}\!\!\bra{\texttt{-+}} \;+\; \ket{\texttt{-}}\!\!\bra{\texttt{+-}}$\,:
  \end{minipage}
&&\mspace{0mu}
  \begin{gathered}~\\[-5ex]
    \includegraphics[scale=0.8,angle=-90]{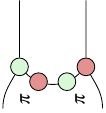}
  \end{gathered}
  \;\;&\cong\;\;
  \begin{gathered}~\\[-5ex]
    \includegraphics[scale=0.8,angle=-90]{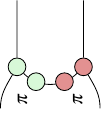}
  \end{gathered}
  \;\;=\;\;
  \begin{gathered}~\\[-5ex]
    \includegraphics[scale=0.8,angle=-90]{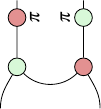}
  \end{gathered}
  \mspace{-20mu}
\\[4ex]
\mspace{-24mu}
  \begin{minipage}{40mm}\raggedright
      Negative branch, using \\
      $K_{1,S} \,=\, \ket{0}\!\!\bra{01} \;+\; \ket{1}\!\!\bra{10}$\,,
      $K_{1,R} \,=\, \ket{\texttt{+}}\!\!\bra{\texttt{+-}} \;+\; \ket{\texttt{-}}\!\!\bra{\texttt{-+}}$\,:
  \end{minipage}
&&\mspace{0mu}
  \begin{gathered}~\\[-5ex]
    \includegraphics[scale=0.8,angle=-90]{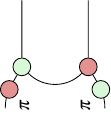}
  \end{gathered}
  \;\;&=\;\;
  \begin{gathered}~\\[-5ex]
    \includegraphics[scale=0.8,angle=-90]{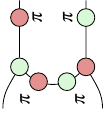}
  \end{gathered}
  \;\;\cong\;\;
  \begin{gathered}~\\[-5ex]
    \includegraphics[scale=0.8,angle=-90]{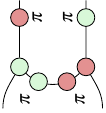}
  \end{gathered}
  \;\;=\;\;
  \begin{gathered}~\\[-5ex]
    \includegraphics[scale=0.8,angle=-90]{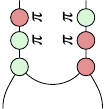}
  \end{gathered}
  \mspace{-20mu}
\\[4ex]
\mspace{-24mu}
  \begin{minipage}{40mm}\raggedright
      Negative branch, using \\
      $K_{1,S} \,=\, \ket{0}\!\!\bra{01} \;+\; \ket{1}\!\!\bra{10}$\,,
      $K_{1,R} \,=\, \ket{\texttt{+}}\!\!\bra{\texttt{+-}} \;+\; \ket{\texttt{-}}\!\!\bra{\texttt{-+}}$\,:
  \end{minipage}
&&\mspace{0mu}
  \begin{gathered}~\\[-5ex]
    \includegraphics[scale=0.8,angle=-90]{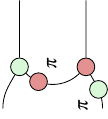}
  \end{gathered}
  \;\;&=\;\;
  \begin{gathered}~\\[-5ex]
    \includegraphics[scale=0.8,angle=-90]{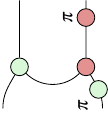}
  \end{gathered}
  \;\;=\;\;
  \begin{gathered}~\\[-5ex]
    \includegraphics[scale=0.8,angle=-90]{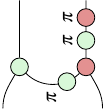}
  \end{gathered}
  \;\;=\;\;
  \begin{gathered}~\\[-5ex]
    \includegraphics[scale=0.8,angle=-90]{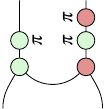}
  \end{gathered}
  \mspace{-20mu}
\end{align}
\end{subequations}

\vspace*{-2ex}
\subsubsection{Split control and target, and simulate an intermediary Bell measurement.}
\vspace*{-1ex}

The adjoint of the above diagrams --- involving preparing the state $\ket{\Phi^+}$ and then merging the two halves into the control (smooth merge) and target (rough merge) --- corresponds to a procedure in which we perform a smooth split operation on the control, a rough split operation on the target, and then perform a Bell measurement involving a child memory of each of these splits to obtain the outcome $\ket{\Phi^+}$.
(The ``projector'' $\sqrt 2 \bra{\Phi^+}$ is represented by a ``cap'', or a curved wire with two inputs, again representing the fact that as a tensor it has the same coefficients as the identity matrix.) 
As we cannot perform this measurement deterministically, we may simulate it using one of two different merge procedures:
\vspace*{0ex}
\begin{equation}
  \begin{gathered}~\\[-5ex]
  \includegraphics[scale=0.8,angle=-90]{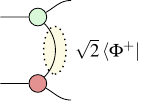}
\end{gathered}  
\;\;=\;\;
  \begin{gathered}~\\[-5ex]
  \includegraphics[scale=0.8,angle=-90]{ZXfigs/proj-zbell-CNOT.pdf}
\end{gathered}  
\quad=\;\;
  \begin{gathered}~\\[-5ex]
  \includegraphics[scale=0.8,angle=-90]{ZXfigs/proj-xbell-CNOT.pdf}
\end{gathered}
\end{equation}~\\[-1ex]
In each case, both the merge and the measurement yields non-deterministic results.
The measurement nodes illustrated here (representing $\bra{\texttt{+}}$ or $\bra{0}$ respectively, up to scalar factors) both have a phase of $0$; the alternative outcome in each case is a phase of $\pi$ (representing $\bra{\texttt{-}}$ or $\bra{1}$, respectively).
We may represent the possible outcomes of the measurement using a phase $b \pi$, depending on a bit $b \in \{0,1\}$ which indicates the result of the measurement.
This will allow us to reduce the possible transformations to the positive branch, regardless of the outcome of the measurement.

\vspace*{-2ex}
\paragraph{(a) Bell measurement using a smooth merge}

\begin{subequations}%
\allowdisplaybreaks%
\begin{align}{}
\mspace{-24mu}
  \begin{minipage}{40mm}\raggedright
    Positive branch \\ of merge:
  \end{minipage}
&&
  \begin{gathered}~\\[-5ex]
  \includegraphics[scale=0.8,angle=-90]{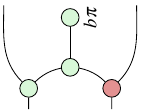}
\end{gathered}  
\;\;&=\;\;
  \begin{gathered}~\\[-5ex]
  \includegraphics[scale=0.8,angle=-90]{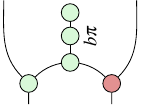}
\end{gathered}  
\;\;=\;\;
  \begin{gathered}~\\[-5ex]
  \includegraphics[scale=0.8,angle=-90]{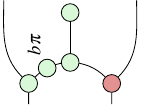}
\end{gathered}  
\;\;=\;\;
  \begin{gathered}~\\[-5ex]
  \includegraphics[scale=0.8,angle=-90]{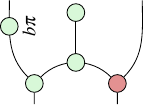}
\end{gathered}  
\mspace{-20mu}
\\[4ex]
\mspace{-24mu}
  \begin{minipage}{40mm}\raggedright
    Negative branch \\ of merge, using
      $K_{1,S} \,=\, \ket{0}\!\!\bra{10} \,+\, \ket{1}\!\!\bra{01}$\,:
  \end{minipage}
&&
  \begin{gathered}~\\[-5ex]
    \includegraphics[scale=0.8,angle=-90]{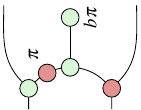}
  \end{gathered}  
  \;\;&=\;\;
  \begin{gathered}~\\[-5ex]
    \includegraphics[scale=0.8,angle=-90]{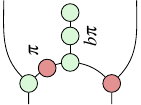}
  \end{gathered}  
  \;\;\cong\;\;
  \begin{gathered}~\\[-5ex]
    \includegraphics[scale=0.8,angle=-90]{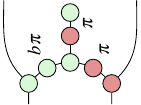}
  \end{gathered}  
  \;\;=\;\;
  \begin{gathered}~\\[-5ex]
    \includegraphics[scale=0.8,angle=-90]{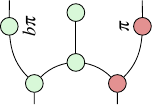}
  \end{gathered}  
  \mspace{-20mu}
\intertext{%
(Here, $\cong$ denotes equality up to an irrelevant phase factor of $(-1)^b$.
Note that we simply absorb the red $\pi$-phase into the green degree-1 node, corresponding to the equation $\bra{\texttt{+}} = \bra{\texttt{+}} X$ of operators.)
}
\mspace{-24mu}
  \begin{minipage}{40mm}\raggedright
    Negative branch \\ of merge, using
      $K_{1,S} \,=\, \ket{0}\!\!\bra{01} \,+\, \ket{1}\!\!\bra{10}$\,:
  \end{minipage}
&&
  \begin{gathered}~\\[-5ex]
    \includegraphics[scale=0.8,angle=-90]{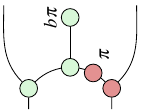}
  \end{gathered}  
  \;\;&=\;\;
  \begin{gathered}~\\[-5ex]
    \includegraphics[scale=0.8,angle=-90]{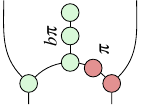}
  \end{gathered}  
  \;\;=\;\;
  \begin{gathered}~\\[-5ex]
    \includegraphics[scale=0.8,angle=-90]{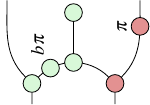}
  \end{gathered}  
  \;\;=\;\;
  \begin{gathered}~\\[-5ex]
    \includegraphics[scale=0.8,angle=-90]{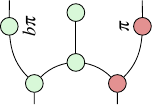}
  \end{gathered}  
  \mspace{-20mu}
\end{align}
\end{subequations}

\vspace*{-2ex}
\paragraph{(b) Bell measurement using a rough merge}

\begin{subequations}%
\allowdisplaybreaks%
\begin{align}{}
\mspace{-24mu}
  \begin{minipage}{40mm}\raggedright
    Positive branch \\ of merge:
  \end{minipage}
&&
  \begin{gathered}~\\[-5ex]
  \includegraphics[scale=0.8,angle=-90]{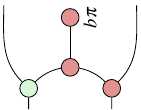}
\end{gathered}  
\;\;&=\;\;
  \begin{gathered}~\\[-5ex]
  \includegraphics[scale=0.8,angle=-90]{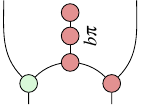}
\end{gathered}  
\;\;=\;\;
  \begin{gathered}~\\[-5ex]
  \includegraphics[scale=0.8,angle=-90]{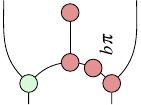}
\end{gathered}  
\;\;=\;\;
  \begin{gathered}~\\[-5ex]
  \includegraphics[scale=0.8,angle=-90]{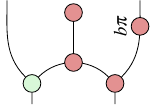}
\end{gathered}  
\mspace{-20mu}
\\[4ex]
\mspace{-24mu}
  \begin{minipage}{40mm}\raggedright
    Negative branch \\ of merge, using
      $K_{1,R} \,=\, \ket{\texttt{+}}\!\!\bra{\texttt{-+}} \,+\, \ket{\texttt{-}}\!\!\bra{\texttt{+-}}$\,:
  \end{minipage}
&&
  \begin{gathered}~\\[-5ex]
    \includegraphics[scale=0.8,angle=-90]{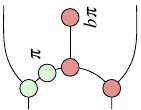}
  \end{gathered}  
  \;\;&=\;\;
  \begin{gathered}~\\[-5ex]
    \includegraphics[scale=0.8,angle=-90]{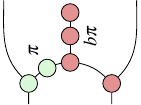}
  \end{gathered}  
  \;\;=\;\;
  \begin{gathered}~\\[-5ex]
    \includegraphics[scale=0.8,angle=-90]{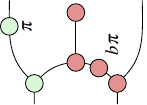}
  \end{gathered}  
  \;\;=\;\;
  \begin{gathered}~\\[-5ex]
    \includegraphics[scale=0.8,angle=-90]{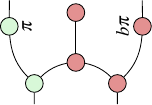}
  \end{gathered}  
  \mspace{-20mu}
\\[4ex]
\mspace{-24mu}
  \begin{minipage}{40mm}\raggedright
    Negative branch \\ of merge, using
      $K_{1,R} \,=\, \ket{\texttt{+}}\!\!\bra{\texttt{+-}} \,+\, \ket{\texttt{-}}\!\!\bra{\texttt{-+}}$\,:
  \end{minipage}
&&
  \begin{gathered}~\\[-5ex]
    \includegraphics[scale=0.8,angle=-90]{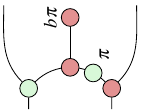}
  \end{gathered}  
  \;\;&=\;\;
  \begin{gathered}~\\[-5ex]
    \includegraphics[scale=0.8,angle=-90]{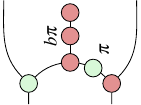}
  \end{gathered}  
  \;\;\cong\;\;
  \begin{gathered}~\\[-5ex]
    \includegraphics[scale=0.8,angle=-90]{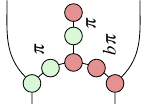}
  \end{gathered}  
  \;\;=\;\;
  \begin{gathered}~\\[-5ex]
    \includegraphics[scale=0.8,angle=-90]{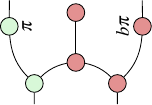}
  \end{gathered}  
  \mspace{-20mu}
\end{align}
(Here again, $\cong$ denotes equality up to an irrelevant phase factor of $(-1)^b$.
Note that we simply absorb the red $\pi$-phase into the green projection node, corresponding to the equation $\bra{0} = \bra{0} Z$ of operators.)
\end{subequations}

\subsection{Magic state merge patterns}
\label{apx:TgateMerge}

In Section~\ref{sec:magicState}, we demonstrate how to realise a $T$ rotation by merging a magic state with an input qubit.
In this Section, we demonstrate that --- similarly to the realisations of \cnot\ above --- the choice of convention for the negative branch of the merge does not introduce any essential difficulties in the interpretation of $T$ gate teleportation as being realised by merge operations.

The effect of the positive branch of the merge operation is the same as illustrated in Eqn.~\eqref{eqn:positiveBranchTmerge} regardless of the convention for the negative branch; we show that the opposite convention to (the left-hand side of) Eqn.~\eqref{eqn:negativeBranchTmerge-a} for the negative merge is equivalent.
In that alternative convention, the red $\pi$-node appears on the opposite input to the $2$-to-$1$ node than the magic state, yielding the transformation
\vspace*{2ex}
\begin{equation}
\label{eqn:negativeBranchTmerge-b}
\begin{aligned}~\\[-8ex]
  \includegraphics[scale=0.8,angle=-90]{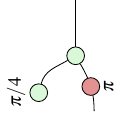}
\end{aligned}
\;=\;
\begin{aligned}~\\[-6ex]
  \includegraphics[scale=0.8,angle=-90]{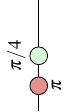}
\end{aligned}
\;=\;
\begin{aligned}~\\[-6ex]
  \includegraphics[scale=0.8,angle=-90]{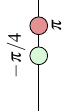}
\end{aligned}\;.
\end{equation}~\\[-2ex]
This operator differs from the operation on the right-hand side of Eqn.~\eqref{eqn:negativeBranchTmerge-a} by a red $\pi$ node, which is an $X$ operation; the two procedures then differ only by a $\nott$ operation controlled on the merge outcome.

Notice that the effect of the negative branch described by Eqn.~\eqref{eqn:negativeBranchTmerge-b} corresponds to the $T$ gate teleportation procedure of \cite[Fig.~10.25]{nandc}, but without realising this operation in terms of a \cnot\ gate in the form of a decomposition such as in Eqn.~\eqref{cnot}.
This also demonstrates a simple example of the usefulness of the ZX calculus to simplify quantum information processing procedures, as follows.
If we express the effect of the negative branch shown in Eqn.~\eqref{eqn:negativeBranchTmerge-a} in a form similar to \cite[Fig.~10.25]{nandc}, the role of the control and target are swapped. (The target in the new circuit is still the qubit which is measured.) Interestingly however, it omits the classically-controlled $\nott$ correction.
By the interchangeability of the two conventions for adapting the Pauli frame in a merge operation, the ZX calculus thus allows us to arrive at a protocol that is simpler to realise --- even for a well-established procedure in quantum information processing.

\end{document}